\crefname{theorem}{Theorem}{Theorems}
\Crefname{lemma}{Lemma}{Lemmas}
\Crefname{claim}{Claim}{Claims}
\Crefname{fact}{Fact}{Facts}
\Crefname{remark}{Remark}{Remarks}
\Crefname{observation}{Observation}{Observations}
\Crefname{figure}{Figure}{Figures}
\Crefname{line}{Line}{Lines}
\Crefname{algocf}{Algorithm}{Algorithms}
\Crefname{stepsromani}{Step}{Steps}
\Crefname{stepsarabici}{Step}{Steps}
\newtheorem{theorem}{Theorem}
\newtheorem{lemma}[theorem]{Lemma}
\newtheorem{proposition}[theorem]{Proposition}
\newtheorem{definition}[theorem]{Definition}
\newtheorem{remark}[theorem]{Remark}
\setlist[enumerate,1]{label=(\roman*), leftmargin=2.2em}
\setlist[enumerate,2]{label=(\alph*)}
\setlist{nosep,topsep=0.1em}
\setlist[itemize,1]{label={\bfseries--}}
\newlist{stepsroman}{enumerate}{1}
\setlist[stepsroman]{label=(\roman*), leftmargin=2.2em}
\newlist{stepsarabic}{enumerate}{1}
\setlist[stepsarabic]{rightmargin=0.2em, label=\arabic*., ref=\arabic*}
\definecolor{darkblue}{rgb}{0,0,0.38}
\definecolor{darkred}{rgb}{0.6,0,0}
\definecolor{darkgreen}{rgb}{0.1,0.35,0}
\patchcmd\blx@bblinput{\blx@blxinit}
                      {\blx@blxinit
                      }{}{\fail}
 \newcommand{\linkdest}[1]{\Hy@raisedlink{\hypertarget{#1}{}}}
\newcommand{\PCTSP}{\protect\hyperlink{prb:PCTSP}{PCTSP}\xspace}
\newcommand{\PCS}{\protect\hyperlink{prb:PCS}{PCS}\xspace}
\newcommand{\TSP}{\protect\hyperlink{prb:TSP}{TSP}\xspace}
\newcommand{\pathTSP}{\protect\hyperlink{prb:pathTSP}{Path TSP}\xspace}
\newcommand{\Exp}{\mathbb{E}}
\newcommand{\Prob}{\mathbb{P}}
\newcommand{\odd}{\operatorname{odd}}
\newcommand{\core}{\operatorname{core}}
\newcommand{\drop}{{\delta}}
\newcommand{\dd}{\operatorname{d}\!}
\newcommand{\PQJ}[1][Q]{P_{#1\text{-join}}^{\uparrow}}
\renewcommand{\PP}[2]{{\mathbb{P}}_{#1}\left[#2\right]}
\renewcommand{\EE}[2]{{\mathbb{E}}_{#1}\left[#2\right]}
\newcommand{\spare}[1]{\text{spare}_{#1}}
\addspace\mkbibbrackets{\thefield{eprintclass}}}}}
\addspace\mkbibbrackets{\thefield{eprintclass}}}}}
\newcommand{\labeltarget}[1]{\Hy@raisedlink{\hypertarget{#1}{}}}
\newcommand\appendtographicspath[1]{%
  \g@addto@macro\Ginput@path{#1}%
}
\let\oldtop\top
\renewcommand{\top}{{\scriptscriptstyle{\oldtop}}}
\newcommand{\transpose}{{\scriptscriptstyle{\oldtop}}}
\def\@fnsymbol#1{\ensuremath{\ifcase#1\or *\or %
\ddagger\or
    \mathsection\or \mathparagraph\or \|\or **\or \dagger\dagger
    \or \ddagger\ddagger \else\@ctrerr\fi}}
\title{A Better-Than-1.6-Approximation for Prize-Collecting TSP}
\author{%
Jannis Blauth%
\footnote{%
Research Institute for Discrete Mathematics and Hausdorff Center for Mathematics, University of Bonn, Bonn, Germany.
Email:
\href{mailto:blauth@or.uni-bonn.de}{blauth@or.uni-bonn.de}.
}%
\and
Nathan Klein%
\footnote{%
Boston University, Boston, MA, USA.
Email:
\href{mailto:nklei1@bu.edu}{nklei1@bu.edu}.
Research supported by Air Force Office of Scientific Research grant FA9550-20-1-0212 and NSF grants DGE-1762114 and CCF-1813135.
}%
\and
Martin Nägele%
\footnote{%
Department of Mathematics, ETH Zurich, Zurich, Switzerland.
Email:
\href{mailto:martinn@ethz.ch}{martinn@ethz.ch}. %
Partially supported by the Swiss National Science Foundation (grant no.\ P500PT\_206742) and the Deutsche Forschungsgemeinschaft (DFG, German Research Foundation) under Germany's Excellence Strategy~--~EXZ-2047/1~--~390685813.
Most of this work was done while M.~Nägele was employed at University of Bonn.
}%
}%
\date{}
\begin{document}

\maketitle

\begin{abstract}
Prize-Collecting TSP is a variant of the traveling salesperson problem where one may drop vertices from the tour at the cost of vertex-dependent penalties.
The quality of a solution is then measured by adding the length of the tour and the sum of all penalties of vertices that are not visited.
We present a polynomial-time approximation algorithm with an approximation guarantee slightly below $1.6$, where the guarantee is with respect to the natural linear programming relaxation of the problem.
This improves upon the previous best-known approximation ratio of $1.774$.
Our approach is based on a known decomposition for solutions of this linear relaxation into rooted trees.
Our algorithm takes a tree from this decomposition and then performs a pruning step before doing parity correction on the remainder.
Using a simple analysis, we bound the approximation guarantee of the proposed algorithm by $\sfrac{(1+\sqrt{5})}{2} \approx 1.618$, the golden ratio.
With some additional technical care we further improve the approximation guarantee to $1.599$.
Furthermore, we show that for the path version of Prize-Collecting TSP (known as Prize-Collecting Stroll) our approach yields an approximation guarantee of  $1.6662$, improving upon the previous best-known guarantee of $1.926$.

\end{abstract}

\thispagestyle{empty}
\addtocounter{page}{-1}

\begin{tikzpicture}[overlay, remember picture, shift = {(current page.south east)}]
\node[anchor=south east, outer sep=5mm] {
\begin{tikzpicture}[outer sep=0] %
\node (dfg) {\includegraphics[height=12mm]{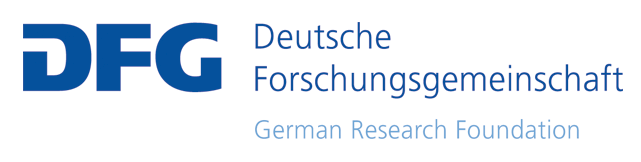}};
\node[left=5mm of dfg, yshift=1mm] (snf) {\includegraphics[height=8mm]{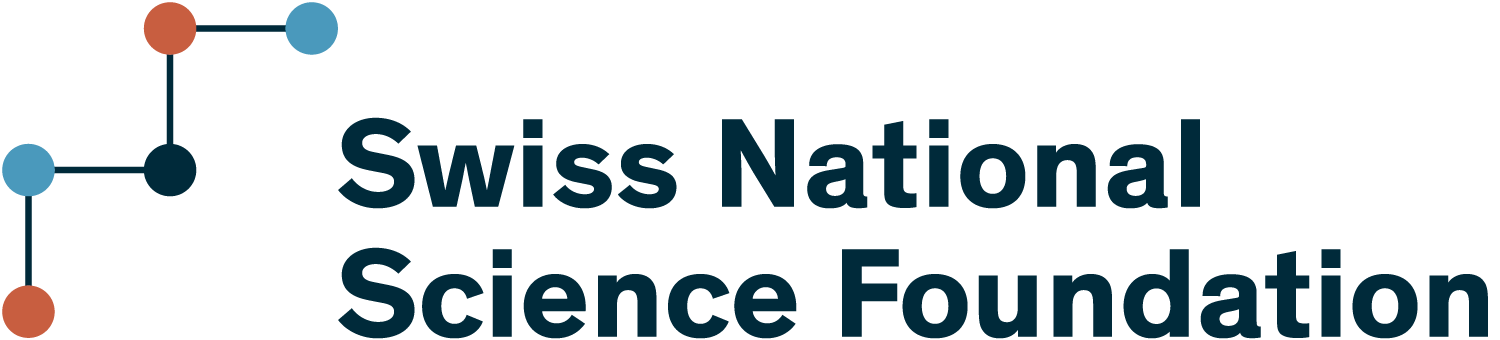}};
\node[left=5mm of snf] (nsf) {\includegraphics[height=13mm]{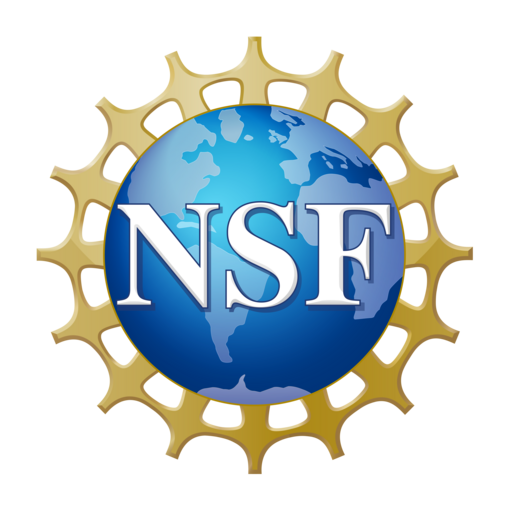}};
\end{tikzpicture}
};
\end{tikzpicture}

\newpage

\section{Introduction}

The metric traveling salesperson problem (\linkdest{prb:TSP}{\TSP}) is one of the most fundamental problems in combinatorial optimization.
In an instance of this problem, we are given a set $V$ of $n$ vertices along with their pairwise symmetric distances, $c\colon V\times V \to\mathbb{R}_{\geq 0}$, which form a metric.
The goal is to find a shortest possible Hamiltonian cycle.
In the classical interpretation, there is a salesperson who needs to visit a set of cities $V$ and wants to minimize the length of their tour.
In this work, we study a variant known as \textit{prize-collecting} \TSP, in which the salesperson can decide whether or not to include each city besides the starting one%
\footnote{``Rooted'' and unrooted versions are reducible to one another while preserving approximability, as noted in, e.g.,~\cite{archer_2011_improved}.
Here, we always require a root vertex $r$.}
in their tour at the cost of a city-dependent penalty.
Formally, the problem can be stated as follows.

\begin{mdframed}[userdefinedwidth=0.95\linewidth]
\linkdest{prb:PCTSP}{\textbf{Prize-Collecting TSP (\PCTSP):}}
Given a complete undirected graph $G=(V,E)$ with metric edge lengths $c_e\geq 0$ for all $e\in E$, a root $r \in V$, and penalties $\pi_v\geq 0$ for all $v\in V \setminus \{r\}$, the task is to find a cycle $C=(V_C, E_C)$ in $G$ that contains the root $r$ and minimizes
$$\sum_{e\in E_C} c_{e} + \sum_{v\in V\setminus V_C} \pi_v\enspace.$$
\end{mdframed}

\noindent
This is a very natural generalization of \TSP (one can recover \TSP by setting $\pi_v = \infty$ for all $v\in V\setminus\{r\}$), as from the salesperson's perspective some cities may not be worth visiting if they significantly increase the length of the tour.
Indeed, in many real-world settings instances of TSP are actually prize-collecting.

As mentioned, \PCTSP is at least as hard as \TSP.
Thus, it is \NP-hard to approximate within a factor of $\sfrac{123}{122}$ \cite{karpinski_2015_inapproximability}.
On the positive side, the first constant-factor approximation algorithm for \PCTSP was shown in the early~'90s \cite{bienstock_1993_note}, giving a ratio of~$2.5$.
After a series of improvements \cite{goemans_1995_general,archer_2011_improved,goemans_2009_combining}, the best approximation factor is now slightly below~$1.774$  \cite{blauth_2023_improved}.

In \TSP and many of its variants, such approximation guarantees typically rely on lower bounds obtained through \emph{linear programming relaxations}.
For \PCTSP, the natural such formulation is the following:\footnote{We use $\pi_r\coloneqq0$ for convenience.
For $S\subseteq V$, we denote $\delta(S)\coloneqq \{e\in E\colon |e\cap S|=1\}$; for $v\in V$, we use $\delta(v)\coloneqq \delta(\{v\})$.}
\begin{equation}\tag{PCTSP LP relaxation}\label{eq:rel_PCTSP}
\begin{array}{rrcll}
\min & \displaystyle \sum_{e\in E}c_ex_e + \sum_{v\in V}\pi_v (1-y_v) \\
     & x(\delta(v)) & = & 2y_v & \forall v\in V\setminus\{r\} \\
     & x(\delta(r)) & \le & 2 \\
     & x(\delta(S)) & \geq & 2y_v & \forall S\subseteq V\setminus\{r\}, v\in S\\
     & y_r & = & 1 \\
     & x_e & \geq & 0 & \forall e\in E\\
     & y_v & \geq & 0 & \forall v\in V\enspace.
\end{array}
\end{equation}
One can see that $y_v\leq 1$ is implied by the above formulation, hence the variables $y_v$ can be interpreted as the extent to which the vertex $v$ is visited by the fractional solution.
In this paper, we prove the following.

\begin{theorem}\label{thm:main}
There is a polynomial-time LP-relative $1.599$-approximation algorithm for \PCTSP.
\end{theorem}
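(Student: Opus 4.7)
The plan is to prove Theorem~\ref{thm:main} via an LP-based algorithm that (i) solves the PCTSP LP relaxation, (ii) uses a known decomposition of the fractional solution $(x^*, y^*)$ into rooted trees, (iii) performs a pruning step that drops vertices whose penalty is cheaper than continuing to include them, and (iv) does parity correction on the resulting tree via a minimum-cost $T$-join, then shortcuts to obtain a tour through $r$. The starting point is the tree decomposition due to Blauth--N\"agele~\cite{blauth_2023_improved}, which expresses (a scaled version of) $x^*$ as a convex combination of spanning trees of the support rooted at $r$, each of which carries $y^*_v$ units of ``share'' at every vertex $v$. This gives a distribution over rooted trees $T$ with $\E{c(T)} \le (1+o(1)) \cdot \sum_e c_e x^*_e$ and such that every vertex $v$ is covered by the tree with probability $y^*_v$.

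Given such a sampled tree $T$, I would introduce a threshold parameter $\alpha\in(0,1)$ and perform pruning as follows: traverse $T$ bottom-up and discard any maximal subtree $T_v$ rooted at a non-root vertex $v$ whenever the penalty $\pi(T_v) = \sum_{u\in V(T_v)}\pi_u$ is at most $\alpha$ times the cost of connecting $T_v$ to its parent plus the internal edges of $T_v$. Dropping such a subtree converts tree/matching cost into penalty cost at a favorable rate; importantly, because we only drop subtrees, the remaining structure is still a tree rooted at~$r$. On the remaining tree $T'$, I would then do Christofides-style parity correction: add a min-cost $T$-join on the odd-degree vertices of $T'$ other than $r$, and shortcut to a cycle through~$r$. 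For the $T$-join I would bound its cost by $\tfrac12 \cdot x^*$-tour cost restricted to the vertices we kept, using that $\tfrac12 x^*$ lies in the appropriate $T$-join-dominant polytope of the induced instance (after suitably reassigning $y^*$-mass of dropped vertices; this is the standard Goemans--Williamson/Wolsey style argument adapted to the prize-collecting setting).

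The analysis then splits $\mathrm{OPT}_{\mathrm{LP}} = C + P$ where $C = \sum_e c_e x^*_e$ and $P = \sum_v \pi_v(1-y^*_v)$, and bounds the expected algorithm cost as a convex combination
\[
\E{\text{ALG}} \;\le\; \bigl(1 + \tfrac12 f(\alpha)\bigr)\cdot C \;+\; g(\alpha)\cdot P,
\]
where $f$ captures the fraction of tree/matching cost not yet amortized away by the pruning rule, and $g$ captures the fact that a vertex is paid in the penalty term either because $y^*_v<1$ or because its subtree was dropped. The key identity is that whenever a subtree is dropped, the penalty paid is at most $\alpha$ times the corresponding connection cost, so the penalty cost incurred in the algorithm is at most $P + \alpha\cdot(\text{saved tree and matching cost})$. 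Balancing these two contributions so that the tree/matching ratio equals the penalty ratio yields the equation $\alpha^2 = 1-\alpha$, whose positive root is $\alpha = \tfrac{\sqrt{5}-1}{2}$, giving the clean bound $\tfrac{1+\sqrt{5}}{2}\approx 1.618$ claimed in the abstract.

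The hard part is pushing this below $1.6$. I would do it via two refinements. First, as in Blauth--N\"agele, exploit that the parity-correction $T$-join can use edges of the fractional LP solution that were \emph{not} selected by the tree, so the factor $\tfrac12$ in front of $C$ for matching can be improved to $\tfrac12 - \varepsilon$ for an absolute constant $\varepsilon>0$, using that tight LP solutions near the Held--Karp bound must have many near-integral cuts that help the matching. Second, randomize the pruning threshold (e.g., pick $\alpha$ from a carefully chosen distribution, or randomize which subtrees are dropped when the pruning inequality is nearly tight). Both refinements interact with the decomposition and require checking that the shortcut tour still dominates $y^*$ and that dropped vertices are correctly charged only to $P$ (and not double-charged). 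The main obstacle is thus the last step: carrying out these gains \emph{simultaneously} while preserving the martingale-like charging scheme that underlies the $1.618$ analysis, and showing that the resulting $\varepsilon$-improvement in the matching bound propagates through the balancing equation to yield a ratio below $1.6$, specifically $1.599$.
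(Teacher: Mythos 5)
Your outline shares the paper's skeleton (LP solution, the Blauth--N\"agele tree decomposition, a pruning step, parity correction, and a randomized threshold analysis), but the two steps that actually carry the result are missing or wrong. First, your bound on the parity-correction cost rests on the claim that $\tfrac12 x^*$ (after ``suitably reassigning $y^*$-mass of dropped vertices'') lies in the dominant of the relevant join polytope. This is precisely what fails in the prize-collecting setting: cuts $S$ separating only low-$y$ vertices from $r$ can have $x^*(\delta(S))$ arbitrarily close to $0$, so no constant multiple of $x^*$ alone is feasible for $\PQJ[\odd(T)]$. The paper needs two dedicated devices to get around this: a splitting-off preprocessing (\cref{thm:complete_splitting_pctsp}) that removes all vertices with $y_v<\delta$ from the LP solution so that every nontrivial cut has $x(\delta(S))\ge 2\delta$, and a pruning rule defined by a $y$-threshold (the core $\core(T_0,\gamma)$, the minimal subtree spanning all $v$ with $y_v\ge\gamma$), which guarantees that any odd cut crossed by a single tree edge contains a vertex with $y_v\ge\gamma$ and any other odd cut is crossed by at least three tree edges. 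Only then can one exhibit a feasible join point of the form $\frac{1}{3-\delta}x+\sum_i\bigl(1-\frac{2\eta_i}{3-\delta}\bigr)\chi^{E_i}$ (\cref{lem:length}). Your pruning rule is different in kind (drop a subtree when its penalty is at most $\alpha$ times its connection cost) and provides no such cut-structure guarantee, so the charging ``penalty paid $\le \alpha\cdot$ saved connection cost'' does not by itself bound the join cost; the balancing equation $\alpha^2=1-\alpha$ producing the golden ratio is asserted rather than derived from a feasible join vector and a penalty bound like \cref{lem:penalty}.

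Second, your route below $1.6$ does not work as stated. The proposed improvement of the matching factor to $\tfrac12-\varepsilon$ via ``near-integral cuts'' of tight Held--Karp-like solutions is speculation imported from a different line of work and is not substantiated; nothing in the paper (or in your outline) supports it. In the paper, the entire gain from $\sfrac{(1+\sqrt5)}{2}$ to $1.599$ comes from randomizing the drop parameter $\delta$ over an interval $[\kappa_0,\kappa]$ with the explicit density proportional to $(3-\delta)(\kappa-\delta)^{2.2}$, re-doing the expected-length and expected-penalty computations of \cref{lem:length,lem:penalty} under this mixture (using a concavity bound on $\phi(\delta)=\frac{(3-\delta-\kappa)(3-\delta)}{3-\delta-y_v^*}$ to evaluate the integrals), and numerically optimizing $\kappa_0=0.3724$, $\kappa=0.9971$; the choice of $\delta$ is then derandomized by trying the instance-specific values $\{y_v^*\colon v\in V\}$. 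Your proposal gestures at ``a carefully chosen distribution'' but supplies neither the distribution nor the computation, so the claimed $1.599$ is not established. To repair the argument you would need to adopt (or reinvent) the splitting-off step, the core-based pruning, the explicit join vector, and the quantitative two-parameter randomization.
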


To obtain this result, we exploit a known decomposition of solutions $(x,y)$ to the above relaxation into trees (see \cref{lem:tree_decomposition} for the formal statement), which can be obtained very similar to an existential result on packing branchings in a directed multigraph by \textcite[Theorem~2.6]{bang-jensen_1995_preserving} or a polynomial-time version by \textcite[Theorem~3.1]{post_2015_lpbased} and was---in a generalized form---first used in the context of \PCTSP by \textcite{blauth_2023_improved}.
The decomposition can be interpreted as a distribution $\mu$ over a polynomial number of trees $\mathcal{T}$ rooted at $r$ such that for each tree $T \in \mathcal{T}$
\begin{enumerate*}
\item $\EE{T \sim \mu}{c(E[T])} \le c^\transpose x$ and
\item $\PP{T \sim \mu}{v \in V[T]} = y_v$ for all $v \in V$.
\end{enumerate*}%
\footnote{For a (sub-)graph $H = (V_H,E_H)$, we write $V[H] \coloneqq V_H$ and $E[H] \coloneqq E_H$. Moreover, for a set $F$ of edges we abbreviate $c(F) \coloneqq \sum_{e \in F} c_e$.}

Our algorithm proceeds as follows.
We apply the decomposition to a slightly modified LP solution with $y_v = 0$ or $y_v \ge \delta$ for each $v \in V$ for some parameter~$\delta$.
Then, for a tree $T$ in the support of $\mu$ and a threshold $\gamma$, we prune the tree.
Concretely, we find the inclusion-wise minimal subtree of $T$ which spans all vertices $v \in V[T]$ with $y_v \ge \gamma$.
Finally, we add the minimum cost matching on the odd degree vertices of this subtree.
While our algorithm simply tries all possible trees $T$ in the support of $\mu$ and all possible thresholds $\delta, \gamma \in \{y_v : v \in V\}$, our analysis is randomized:
We sample the tree from $\mu$ and the thresholds $\delta$ and $\gamma$ from well-chosen distributions, and prove the main result in expectation.
Clearly, the same guarantee then holds for the best choice of~$T$, $\delta$ and $\gamma$.

\bigskip

The arguably most prominent generalization of \TSP is its path version (\linkdest{prb:pathTSP}{\pathTSP}), where in addition to the vertex set $V$ and the distances $c$, two vertices $s,t\in V$ are given, and the goal is to find a shortest path with endpoints~$s$ and~$t$ that covers all vertices in $V$.
In this work, we also study a path version of \PCTSP.
Following previous literature (see, e.g., \cite{k_stroll}), we use the term \emph{Prize-Collecting Stroll} for the resulting problem, which is formally defined as follows.%
\footnote{%
Here, an \emph{$s$-$t$ stroll} is a path from $s$ to $t$ in the underlying graph $G$.
The term \emph{stroll} is primarily used to emphasize that the path is not required to be Hamiltonian, i.e., it does not necessarily cover all vertices of $G$.
}

\begin{mdframed}[userdefinedwidth=0.95\linewidth]
\linkdest{prb:PCS}{\textbf{Prize-Collecting Stroll (\PCS):}}
Given a complete undirected graph $G=(V,E)$ with metric edge lengths $c_e\geq 0$ for all $e\in E$, two vertices~$s,t\in V$, and penalties $\pi_v\geq 0$ for all $v\in V \setminus \{s,t\}$, the task is to find an $s$-$t$ stroll $S=(V_S, E_S)$ in $G$ that minimizes
$$\sum_{e\in E_S} c_{e} + \sum_{v\in V\setminus V_S} \pi_v\enspace.$$
\end{mdframed}

\noindent Generally, techniques exploited for \PCTSP can often also be applied to \PCS with some adaptions, and typically slight losses in terms of the approximation guarantee.
The best explicit such adaption in literature is due to \textcite{an_path_tsp}. Using the best-known LP-relative approximation algorithm for \pathTSP \cite{traub_path_tsp, zhong_path_tsp}, they obtain an approximation guarantee of $1.926$.%
\footnote{%
Techniques of \cite{blauth_2023_improved} could be exploited to improve over the approximation factor of \cite{an_path_tsp}, but would---to the best of our knowledge---not beat the results presented here.%
}
We significantly improve over the state of the art by showing that our techniques allow for obtaining the following.

\begin{theorem}\label{thm:PCS}
There is a polynomial-time LP-relative $1.6662$-approximation algorithm for \PCS.
\end{theorem}

More concretely, we show that a straightforward extension of our \PCTSP algorithm readily results in a~$\sfrac53$-approximation algorithm.
We then observe that this algorithm incurs unbalanced losses on the edge cost and penalty terms: While we lose a factor of $\sfrac53$ on the edge costs compared to a lower bound obtained through an LP relaxation, only a factor of $\sfrac32$ is lost on the penalty side.
An adaption of the classical threshold rounding approach of \textcite{bienstock_1993_note} can be tuned to give an imbalance in the other direction. So, the better of the two algorithms can achieve an approximation guarantee just slightly below $\sfrac53$.
We remark that all these guarantees are with respect to the natural LP relaxation for \PCS.

\bigskip

As mentioned, our improved approximation ratio of $1.599$ for \PCTSP improves upon the previous $1.774$-approximation by \textcite{blauth_2023_improved}.
It remains open whether there is an efficient algorithm for \PCTSP{} that matches---in terms of the approximation factor---the $\sfrac{3}{2}$-approximation for \TSP by \textcite{christofides_1976_worst} and \textcite{serdyukov_1978_onekotorykh} (also see \cite{christofides_1976_worst_new,vanBevern_2020_historical}), or the current best known approximation guarantee for \TSP, which is just slightly below $\sfrac{3}{2}$ \cite{karlin_2021_slightly,karlin_2023_derand}.
The ideal result for \PCTSP would be an algorithm that, given an $\alpha$-approximation for \TSP, produces an $\alpha$-approximation for \PCTSP (or possibly an $(\alpha+\varepsilon)$-approximation for every $\varepsilon>0$).
Such a result was recently shown for \pathTSP \cite{traub_2020_reducing}, and as approximation algorithms for \PCTSP begin to approach the threshold $\sfrac{3}{2}$, this possibility feels less out of reach.

In this respect, the situation is slightly different for \PCS.
While the approximation ratio for \pathTSP is currently about $\sfrac{3}{2}$ (as discussed), all known algorithms for \PCS build upon \emph{LP-relative} approximation algorithms for \pathTSP, and the best known LP-relative algorithm is still roughly~$1.528$ \cite{traub_path_tsp, zhong_path_tsp}.
It would be of significant interest to match the latter guarantee for \PCS, or see how non-LP-relative techniques can be applied in a prize-collecting framework. Both of these goals appear to be beyond current knowledge.

\subsection[Prior work on PCTSP and PCS]{Prior work on \PCTSP and \PCS}

While \textcite{balas_1989_prize} was the first to study prize-collecting variations of \TSP, the first constant-factor approximation algorithm for \PCTSP was given by \textcite{bienstock_1993_note} through a simple threshold rounding approach:
Starting from a solution $(x,y)$ of the \ref{eq:rel_PCTSP}, the Christofides-Serdyukov algorithm is used to construct a tour on all vertices $v \in V$ with $y_v \ge \sfrac35$, giving an LP-relative $\sfrac52$-approximation.
\textcite{goemans_1995_general} later obtained a $2$-approximation through a classical primal-dual approach.
More precisely, they showed how to compute a tree $T$ with $c(E[T]) \le c^{\transpose} x$ and $\pi(V \setminus V[T]) \le \pi^{\transpose} (1-y)$, so that doubling the tree yields the $2$-approximation.%
\footnote{%
\label{fn:simple_2_approx}%
As observed in \cite{blauth_2023_improved}, such a tree---and thus an algorithm matching the guarantee of the primal-dual approach---can immediately be obtained from the decomposition of solutions of the \ref{eq:rel_PCTSP} that was mentioned earlier and that is also used in this paper.
Therefore, there are two elementary ways to get a $2$-approximation.
}

The factor of $2$ was first beaten by \textcite{archer_2011_improved}.
As a black-box subroutine, they use an approximation algorithm for \TSP which we assume has ratio $\rho$.
They achieved a $2-\big(\frac{2-\rho}{2+\rho}\big)^2$ approximation, which---for $\rho = \sfrac32$---is approximately $1.979$.
Their algorithm runs the primal-dual algorithm of \citeauthor{goemans_1995_general} and the $\rho$-approximation algorithm for \TSP on a carefully selected node set, and outputs the better of the two tours.
\textcite{goemans_2009_combining} then observed that running both threshold rounding for different thresholds and the primal-dual algorithm and choosing the best among the computed solutions yields an approximation guarantee of $\sfrac{1}{(1-\frac{1}{\beta} e^{1-\sfrac{2}{\beta}})}$, where $\beta$ denotes the approximation guarantee of an LP-relative approximation algorithm for \TSP that is used in a black-box way.
For $\beta=\sfrac32$, the guarantee of \citeauthor{goemans_2009_combining} equals approximately $1.914$.
\citeauthor{goemans_2009_combining} was the first to exploit a randomized analysis of threshold rounding, in which the threshold $\gamma$ is chosen from a specific distribution.

\textcite{blauth_2023_improved} refined the threshold rounding approach by sampling a connected subgraph such that each vertex $v \in V$ with $y_v \ge \gamma$ (again,~$\gamma$ denotes the threshold) is always contained in the vertex set of this subgraph, whereas each vertex $v\in V$ with $y_v<\gamma$ is contained with probability at least~$\exp\left(\sfrac{{-}3 y_v}{4 \gamma}\right)$.
Since each vertex below the threshold is guaranteed to have even degree in this subgraph, parities can be corrected at no extra cost, yielding an approximation guarantee of slightly below $1.774$ through a randomized analysis.
This guarantee beats those of the previously mentioned algorithms even for $\rho=1$ and $\beta=\sfrac43$ (the integrality gap of the linear programming relaxation for \TSP{} used in \cite{goemans_2009_combining}, the Held-Karp relaxation, is at least $\sfrac43$).
Although the high-level idea of \cite{blauth_2023_improved} is not too complicated, it requires a good deal of technical care to sample this subgraph and analyze the expected penalty cost.
The tree construction in our algorithm is much simpler, which is also reflected in the analysis.

\bigskip

\sloppy
For \PCS, one can observe that a modification of the primal-dual approach by \textcite{goemans_1995_general} (like, e.g., in \cite{k_stroll}) yields a $2$-approximation.
\textcite{archer_2011_improved} showed that their techniques allow to beat the factor of $2$.
Concretely, using state-of-the-art \pathTSP algorithms as black-box subroutines, their algorithm is a $1.979$-approximation.
\textcite{an_path_tsp} observed that combining threshold rounding and the primal-dual approach by \textcite{goemans_1995_general} in the same manner as \textcite{goemans_2009_combining} did for \PCTSP, one can obtain an approximation guarantee of $\sfrac{1}{(1-\frac{1}{\beta'} e^{1-\sfrac{2}{\beta'}})}$, where~$\beta'$ denotes the approximation guarantee of an LP-relative approximation algorithm for \pathTSP that is used in a black-box way.
Using the best known bound on $\beta'$ of roughly $1.528$ \cite{traub_path_tsp, zhong_path_tsp} yields an approximation guarantee of approximately $1.926$.

\subsection{Related results}

Alongside the general version targeted here, \PCTSP{} was studied in special metric spaces.
A PTAS is known for graph metrics in planar graphs \cite{bateni_2011_planar} and in metrics with bounded doubling dimension \cite{chan_2020_unified}.
For asymmetric edge costs satisfying the triangle inequality, a $\lceil \log(|V|) \rceil$-approximation is known \cite{nguyen_2013_asymmetric}.

Besides \PCTSP{}, there is a wide class of other prize-collecting \TSP{} variants, most of which originate from the work of \textcite{balas_1989_prize}.
Although \PCTSP{} can be seen as the main variant in this problem class, there are other variants that include a lower bound on some minimum prize money that needs to be collected \cite{balas_1989_prize,ausiello_2000_salesmen,ausiello_2007_prize,garg_2005_saving, k_stroll}, or an upper bound on the distance that can be traveled \cite{blum_2003_approx,chekuri_2012_improved,paul_2020_budgeted,paul_2022_errandum,dezfuli_2022_combinatorial}.

Prize-collecting versions have also been studied for other classical combinatorial optimization problems.
The most prominent example is the prize-collecting Steiner tree problem, which admits a $1.79$-approximation \cite{ahmadiPCSteinerTree}, thereby going beyond the integrality gap of $2$ of the natural linear programming relaxation.
The situation is similar for the more general prize-collecting Steiner forest problem (see, e.g., \cite{hajiaghayi_2006_prize}), which admits a $2$-approximation~\cite{ahmadi20232approximation} while the integrality gap of the natural linear programming relaxation is known to be at least $\sfrac94$~\cite{koenemann_2017_intgap}.
Interestingly, the aforementioned lower bound of $\sfrac94$ is strictly larger than the gap of $2$ that the natural linear programming relaxation for the Steiner forest problem admits, indicating that prize-collecting aspects may in some cases make the problem strictly harder to approximate.
To date, no such separation is known for \TSP and \PCTSP, or \pathTSP and \PCS.

\subsection{Structure of the paper}

After describing our \PCTSP algorithm in more detail in \cref{sec:algorithm}, we first bound its approximation ratio by the golden ratio ${\sfrac{(1+\sqrt{5})}{2}} \approx 1.618$ through a simple analysis in \cref{sec:golden_ratio}.
In \cref{sec:below1.6}, we show that a minor adaption of this algorithm and a slightly more involved analysis allow us to push the approximation guarantee to $1.599$ as in \cref{thm:main}.
Finally, \cref{sec:pc_stroll} is devoted to adapting our algorithm to the setting of \PCS.
We first show in \cref{subsec:PCS_53} that our novel approach results in a $\sfrac53$-approximation, and later in \cref{subsec:PCS_combination} provide details on how to slightly beat this guarantee in order to obtain \cref{thm:PCS}. %
\section[Our algorithm for PCTSP]{Our algorithm for \PCTSP}\label{sec:algorithm}

Our algorithm for \PCTSP follows the basic idea of the Christofides-Serdyukov algorithm for \TSP, which is to combine a spanning tree $T$ with a shortest $\odd(T)$-join%
\footnote{
For a graph $G=(V,E)$, we denote by $\odd(G)\coloneqq \{v\in V\colon \deg_G(v) \text{ odd}\}$ the set of vertices with odd degree in $G$.
Furthermore, for $Q\subseteq V$ of even cardinality, a $Q$-join is a set of edges that has odd degree precisely at vertices in $Q$. (For metric cost functions there is always a minimum cost $Q$-join that is a matching.)
}%
, and shortcut an Eulerian tour in the resulting even-degree graph to obtain a cycle.
The operation of adding a shortest $\odd(T)$-join to a tree $T$ is also known as \emph{parity correction}, as it results in a graph in which every vertex has even degree.
Typically, for an even cardinality set $Q$ (in particular, for $Q=\odd(T)$), the cost of a shortest $Q$-join is bounded by the cost $c^\top z$ of a point $z$ that is feasible for the dominant of the $Q$-join polytope, which is given by (see \cite{edmonds_1973_matching})
\begin{equation}\label{eq:Qjoin}
\PQJ \coloneqq \{x\in\mathbb{R}^E\colon x(\delta(S))\geq 1\ \forall S\subseteq V \text{ with } |S\cap Q| \text{ odd}\}\enspace.
\end{equation}

We use this approach with two additional variations:
First, because our setting allows excluding some vertices in the returned tour, we use trees~$T$ that may not span all vertices in $V$.
Second, we follow what is known as a Best-of-Many approach, i.e., we construct a polynomial-size set of trees, construct a tour from each of them, and return the best.
For the analysis of such an approach, one typically provides a distribution over the involved trees and analyzes the expected cost of the returned tour---which implies the same bound on the best tour. We will do so here.

We base our tree construction on the following decomposition lemma, which we restate here in the form given in \cite{blauth_2023_improved}.
For completeness, in \cref{sec:tree_decomp_proof} we replicate the proof provided in \cite{blauth_2023_improved} with minor simplifications (as the proof in \cite{blauth_2023_improved} shows a generalized version of \cref{lem:tree_decomposition}).
As mentioned earlier, this proof very closely follows proofs for packing branchings in a directed multigraph (see \cite[Theorem 2.6]{bang-jensen_1995_preserving} and \cite[Theorem~3.1]{post_2015_lpbased}).

\begin{lemma}[{\cite[Lemma 12]{blauth_2023_improved}}]\label{lem:tree_decomposition}
Let $(x, y)$ be a feasible solution of the \ref{eq:rel_PCTSP}.
We can in polynomial time compute a set $\mathcal{T}$ of trees that all contain the root $r$, and weights $\mu \in [0,1]^\mathcal{T}$ such that $\sum_{T\in\mathcal{T}}\mu_T = 1$,
\begin{equation*}
\sum_{T\in \mathcal{T}}\mu_T \cdot \chi^{E[T]} \leq x\enspace,
\qquad\text{and}\qquad
\forall v\in V\colon
\sum_{T\in\mathcal{T}\colon v\in V[T]}\mu_T = y_v \enspace.%
\footnote{
For a set $F \subseteq E$ of edges, we denote by $\chi^F \in \{0,1\}^E$ the incidence vector of $F$.
}
\end{equation*}
\end{lemma}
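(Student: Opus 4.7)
The plan is to reduce the construction to packing arborescences in a suitably oriented digraph and then invoke the theorem of \textcite{bang-jensen_1995_preserving}. Starting from the LP solution $(x,y)$, I would first build an auxiliary digraph $D = (V,A)$ on the same vertex set in which every undirected edge $e = \{u,v\}$ gives rise to two opposite arcs $(u,v)$ and $(v,u)$, each with capacity $x_e/2$. The LP constraints translate cleanly to $D$: the degree constraints $x(\delta(v)) = 2y_v$ make both the in-capacity and the out-capacity at each $v \in V \setminus \{r\}$ equal to exactly $y_v$, while the cut constraints $x(\delta(S)) \geq 2y_v$ imply that, for every $S \subseteq V \setminus \{r\}$ and every $v \in S$, the directed cut entering $S$ in $D$ has capacity at least $y_v$.

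Given this setup, I would apply a fractional, capacitated version of the Bang-Jensen--Frank--Jackson arborescence packing theorem to obtain a family $\mathcal{B}$ of out-arborescences rooted at $r$ together with weights $\nu_B \geq 0$ such that every arc $a \in A$ is used to extent at most its capacity and every vertex $v \in V$ is covered by a total weight of exactly $y_v$ of arborescences. Taking $T_B$ to be the undirected tree underlying each $B \in \mathcal{B}$ and setting $\mu_{T_B} \coloneqq \nu_B$ would then yield a family of trees all containing $r$ that satisfies the three required conditions: specializing the coverage condition to $v = r$ and using $y_r = 1$ gives $\sum_{T \in \mathcal{T}} \mu_T = 1$; the coverage condition itself gives $\sum_{T \ni v} \mu_T = y_v$; and the edge bound follows because the two oppositely oriented copies of each edge $e$ together carry capacity $x_e$, so $\sum_{T} \mu_T \chi^{E[T]} \leq x$.

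The main technical obstacle is making everything run in polynomial time and keeping the support of $\mu$ polynomially bounded. For this I would follow the constructive proof strategy suggested by \cite{bang-jensen_1995_preserving}, which iteratively applies Mader/Lov\'asz-style edge splitting off \cite{mader_1978_reduction,lovasz_1976_connectivity,frank_1992_on} at non-root vertices while preserving the relevant local edge-connectivity to $r$, and thereby reduces the packing to a trivial instance on $\{r\}$. The algorithmic splitting-off procedures of \textcite{nagamochi_1997_complete,nagamochi_2006_fast} can be plugged in to make each reduction step polynomial, and reading off one arborescence per step automatically produces a support of polynomial size. A subtle point worth double-checking is the tightness of the coverage condition: this is guaranteed because the degree constraints $x(\delta(v)) = 2y_v$ are tight to begin with and are preserved throughout the splitting-off process, so the in-capacity at each non-root vertex is fully used up by the resulting packing.
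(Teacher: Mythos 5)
Your proposal is correct and follows essentially the same route as the paper: the paper's own proof is a self-contained replication of the Bang-Jensen--Frank--Jackson branching-packing argument, carried out by induction on $|V|$ with complete splitting off at a vertex minimizing $y_v$ and explicit re-insertion of that vertex into the trees (the ``spare''-weight bookkeeping being the detailed version of the constructive step you defer to). Your directed reformulation---bidirecting each edge with capacities $\sfrac{x_e}{2}$ and invoking a fractional, capacitated form of \cite[Theorem~2.6]{bang-jensen_1995_preserving}, with tight coverage exactly $y_v$ because the singleton in-cut at $v$ has capacity exactly $y_v$---is precisely the packaging the paper cites as the origin of the lemma, and your plan to get polynomial time and polynomially bounded support by making the splitting-off induction algorithmic is exactly what the paper's proof carries out.
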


As also mentioned in \cref{fn:simple_2_approx}, this lemma gives rise to an immediate $2$-approximation for \PCTSP that follows the framework described above:
Choosing a tree $T\in \mathcal{T}$ with probability $\mu_T$, and performing parity correction by doubling the tree gives a tour of expected length at most $2 c^{\transpose} x$, while the expected penalty incurred for vertices that are not visited can easily be seen to be $\pi^{\transpose} (1-y)$.

It is an intriguing question whether parity correction can be done at expected cost $\sfrac{c^{\transpose} x}{2}$.
Such a result would immediately lead to a $\sfrac32$-approximation algorithm for \PCTSP, matching the guarantee of the \citeauthor{christofides_1976_worst}-\citeauthor{serdyukov_1978_onekotorykh} algorithm for \TSP.
While in the setting of classical \TSP, we have $\sfrac{x}{2}\in\PQJ$ for any set $Q$ of even cardinality (and can thus bound the cost of parity correction for \emph{any} tree $T$ by $\sfrac{c^{\transpose} x}{2}$), this is no longer true in the prize-collecting setting.

Given a tree $T$, we first prune it to obtain a tree $T'$ (in a way we will shortly explain) and then perform parity correction. To analyze the cost of the parity correction, we construct a point $z\in\PQJ[\odd(T)]$ that is of the form
\begin{equation}\label{eq:combination}
z = \alpha\cdot x + \beta\cdot \chi^{E[T']}
\end{equation}
for some coefficients $\alpha,\beta\in\mathbb{R}_{\geq 0}$. (In fact, we will choose different coefficients $\beta$ for different parts of the tree.)
As the existence of cuts $S\subsetneq V$ for which both $x(\delta(S))$ and $\delta_{T'}(S)$ are small may require to choose $\alpha$ and $\beta$ large, we preprocess both the solution $(x,y)$ of the \ref{eq:rel_PCTSP} that we use as well as the trees that we obtain from it through \cref{lem:tree_decomposition}.

In our first preprocessing step, we get rid of cuts $S$ for which $x(\delta(S))$ is very small.
To this end, observe that our algorithm may always drop vertices $v\in V\setminus\{r\}$ for which $y_v$ is very small.
Concretely, if our tour does not visit a vertex $v$ with $y_v\leq \delta$ for some $\delta\in[0,1)$, we pay a penalty of $\pi_v$, which is at most a factor of $\sfrac1{(1-\delta)}$ larger than the fractional penalty $\pi_v(1-y_v)$ occurring in the LP objective.
Thus, if we aim for an $\alpha$-approximation algorithm, we may safely choose $\delta=1-\sfrac1\alpha$ and drop all vertices with $y_v\leq \delta$.
Crucially for our analysis, we can also perform this dropping on the level of solutions of the \ref{eq:rel_PCTSP} by using the so-called \emph{splitting off} technique \cite{lovasz_1976_connectivity,mader_1978_reduction,frank_1992_on}. For a fixed vertex $v \in V$, splitting off allows to decrease the $x$-weight on two well-chosen edges $\{v,s\}$ and $\{v,t\}$ incident to $v$ (and thereby also the value $y_v$) while increasing the weight on the edge $\{s,t\}$ by the same amount, without affecting feasibility for the \ref{eq:rel_PCTSP}.%
\footnote{
Generally, one can even guarantee that minimum $s$-$t$ cut sizes are preserved for all $s,t\in V\setminus \{v\}$.
Feasibility for the \ref{eq:rel_PCTSP} is already maintained by preserving minimum $r$-$u$ cut sizes for all $u\in V\setminus \{r, v\}$.
}
Note that such a \emph{feasible splitting} at $v$ does not increase the cost of the solution by the triangle inequality. A sequence of feasible splittings at $v$ that result in $y_v=0$ is called a \emph{complete splitting}.
Complete splittings always exist \cite{frank_1992_on} and can be found in polynomial time through  a polynomial number of minimum $s$-$t$ cut computations by trying all candidate pairs of edges (see, e.g., \cite{nagamochi_1997_complete,nagamochi_2006_fast} for more efficient procedures).
Summarizing the above directly gives the following.

\begin{theorem}[Splitting off]\label{thm:complete_splitting_pctsp}
Let $(x^*,y^*)$ be a feasible solution of the \ref{eq:rel_PCTSP}. Let $v \in V \setminus \{r\}$.
There is a deterministic algorithm that computes in polynomial time a complete splitting at $v$, i.e., a sequence of feasible splittings at $v$ resulting in a feasible solution $(x,y)$ of the \ref{eq:rel_PCTSP} with $y_v = 0$, as well as
$$
c^{\transpose} x \leq c^{\transpose} x^*
\qquad\text{and}\qquad
\forall u\in V \setminus \{v\} \colon\ y_u = y^*_u	\enspace.
$$
\end{theorem}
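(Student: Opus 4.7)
The plan is to build the complete splitting as a sequence of feasible splittings at $v$, each reducing $y_v$ by a positive amount. First, I would observe that any feasible $(x,y)$ of the \ref{eq:rel_PCTSP} has $x(\delta(u)) = 2y_u$ for every $u \ne r$, which combined with the cut constraints forces the minimum $r$-$u$ cut value in $(V,x)$ to be exactly $2y_u$ for all $u \ne r$ (the singleton cut $\{u\}$ matches this from above). Consequently, a splitting at $v$ that transfers weight $\epsilon$ from $x_{vs}+x_{vt}$ onto $x_{st}$, together with $y_v \mapsto y_v - \epsilon$, is feasible if and only if it preserves all minimum $r$-$u$ cut values for $u \in V \setminus \{r,v\}$; the constraint at $v$ itself stays tight because both sides drop by $2\epsilon$. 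Note also that while $y_v > 0$, vertex $v$ must have at least two distinct neighbors in the support of $x$: otherwise, if $s$ were the unique neighbor, then $x(\delta(\{v,s\})) = x(\delta(s)) - x_{vs} = 2y_s - 2y_v$ would force $y_v \le 0$ via the cut constraint for $s$.

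Next, I would invoke the classical splitting-off theorems \cite{mader_1978_reduction,frank_1992_on,bang-jensen_1995_preserving} in their rooted, local edge-connectivity preserving form to conclude that whenever $x(\delta(v))>0$, an admissible pair $(s,t)$ together with some $\epsilon > 0$ exists. Algorithmically, I would enumerate all $O(n^2)$ candidate pairs of neighbors of $v$; for each candidate, the largest admissible $\epsilon$ is the smallest slack among the $r$-$u$ cut constraints affected by the splitting, computable via polynomially many min-cut calls. The triangle inequality $c_{st} \le c_{vs} + c_{vt}$ then ensures that $c^{\transpose} x$ does not increase, as the change equals $\epsilon(c_{st} - c_{vs} - c_{vt}) \le 0$; moreover the degrees at $s$ and $t$ (and thus $y_s, y_t$) remain unchanged, so the promised invariant $y_u = y^*_u$ for $u \ne v$ is maintained.

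For termination, I would always take the maximum admissible $\epsilon$ at each step. At that point either one of $x_{vs}, x_{vt}$ becomes zero (strictly reducing the support of $x$ at $v$), or some previously slack minimum $r$-$u$ cut becomes tight—both polynomially bounded events, so polynomially many iterations suffice to reach $y_v = 0$. More efficient implementations are available via \cite{nagamochi_1997_complete,nagamochi_2006_fast}. The main obstacle I anticipate is verifying, in the fractional/weighted setting, the existence of a feasible pair $(s,t)$ whenever $y_v > 0$; this is precisely the content of the cited splitting-off theorems, which extend to edge-weighted graphs either by rescaling to integers or directly as in the edge-weighted algorithm of \cite{nagamochi_2006_fast}.
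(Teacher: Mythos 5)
Your overall route coincides with the paper's: reduce feasibility of a splitting to preservation of the minimum $r$-$u$ cut values for all $u \in V\setminus\{r,v\}$, get the cost bound from the triangle inequality and the invariance of $y_u$ from the fact that the degrees at $s$ and $t$ are unchanged, and defer existence of admissible pairs and polynomial-time computability to the classical splitting-off results (the paper itself only sketches this and cites Lov\'asz, Mader, Frank and Nagamochi--Ibaraki). However, your auxiliary claim that $v$ has at least two distinct neighbors in the support of $x$ whenever $y_v>0$ is false: your argument uses the set $S=\{v,s\}$, which only yields a constraint of the \ref{eq:rel_PCTSP} when $s\neq r$. If the unique neighbor of $v$ is the root, no contradiction arises, and such solutions exist: for $V=\{r,v,w\}$ with $x_{rv}=\tfrac12$, $x_{rw}=1$, $x_{vw}=0$, $y_v=\tfrac14$, $y_w=\tfrac12$, all constraints hold, yet $v$'s only neighbor is $r$ and $y_v>0$. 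On such an input (which can also arise after earlier maximal splittings) your procedure, which enumerates only pairs of \emph{distinct} edges at $v$, finds no admissible pair and stalls with $y_v>0$. The repair is the standard degenerate splitting (pairing an edge with itself and discarding the resulting loop): once $r$ is the only remaining neighbor of $v$, reduce $x_{rv}$ to $0$ and set $y_v=0$; this is feasible because for any $S\ni v$ with $r\notin S$ and any $u\in S\setminus\{v\}$, the set $S\setminus\{v\}$ certifies $x(\delta(S)) = x(\delta(S\setminus\{v\})) + x_{rv} \ge 2y_u + x_{rv}$, so the constraints survive the removal. This case must be stated explicitly; the theorems you cite allow loops for exactly this reason.

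A second, smaller issue is the termination count. As written, ``some previously slack minimum $r$-$u$ cut becomes tight'' is both imprecise (the minimum $r$-$u$ cut value always equals $2y_u$ via the singleton cut $\{u\}$, so it is never slack) and, read as ``some dangerous set becomes tight,'' not obviously a polynomially bounded type of event, since there are exponentially many candidate sets. The standard fix: after splitting off the maximum admissible amount on a pair $(s,t)$, either some edge at $v$ has been zeroed out (weights on edges at $v$ never increase, so at most $n-1$ such events occur in total), or some set $S$ with $s,t\in S$ and $v,r\notin S$ has become tight; since later splittings at $v$ never increase $x(\delta(S))$ for sets avoiding $v$, this set stays tight and blocks the pair $(s,t)$ for the rest of the procedure, so each pair is split maximally at most once and $O(n^2)$ iterations suffice. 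With these two repairs your argument is correct and is essentially the argument the paper intends, spelled out in more detail than the paper itself provides.
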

Our first preprocessing step then consists of repeatedly applying \cref{thm:complete_splitting_pctsp} to vertices $v \in V \setminus \{r\}$ with $y_v < \delta$ for some parameter $\delta \in [0,1)$ that we fix later.

Our second preprocessing step affects the trees that we obtain through \cref{lem:tree_decomposition}.
While the first preprocessing step guarantees that there are no non-trivial cuts $S$ with very small $x(\delta(S))$, we also want to eliminate cuts $S$ with moderately small $x(\delta(S))$ for which $T$ has only a single edge in $\delta(S)$, so that the combination $z$ defined in \eqref{eq:combination} gets a significant contribution from at least one of $x$ or $\chi^{E[T]}$ on every non-trivial cut.
To this end, we use a pruning step defined as follows (also see \cref{fig:core}).

\begin{definition}[Core]
For a fixed solution $(x,y)$ of the \ref{eq:rel_PCTSP}, a tree $T$ containing the root vertex, and a threshold $\gamma$, the \emph{core} of $T$ with respect to $\gamma$, denoted by $\core(T,\gamma)$, is the inclusion-wise minimal subtree of $T$ that spans all vertices $v\in V[T]$ with $y_v\geq \gamma$.
\end{definition}

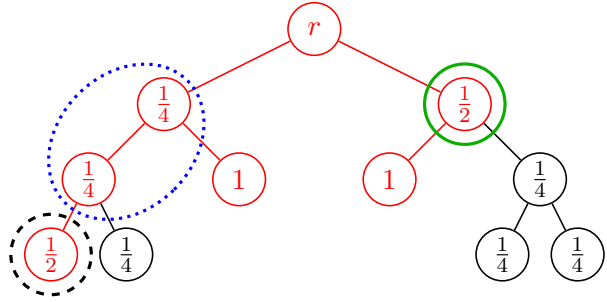
\begin{figure}[!t]
\centering
\begin{tikzpicture}[
    level 1/.style={sibling distance=40mm, level distance=10mm},
    level 2/.style={sibling distance=20mm, level distance=10mm},
    level 3/.style={sibling distance=10mm, level distance=10mm},
    every node/.style = {shape=circle, draw, align=center, minimum size=0.7cm, inner sep=0pt},
    every path/.style = {semithick}]

\node[red] {$r$}
    child [edge from parent/.style={draw=red}] {node[red] (n1) {$\frac{1}{4}$}
        child {node[red] (n2) {$\frac{1}{4}$}
            child {node[red] (n3) {$\frac12$}}
            child [edge from parent/.style={draw=black}] {node[black] {$\frac{1}{4}$}}
        }
        child  {node[red] {$1$}}
    }
    child [edge from parent/.style={draw=red}] {node[red] (n4) {$\frac{1}{4}$}
        child {node[red] {$1$}}
        child [edge from parent/.style={draw=black}] {node[black] {$\frac{1}{4}$}
            child [edge from parent/.style={draw=black}] {node[black] {$\frac{1}{4}$}}
            child [edge from parent/.style={draw=black}] {node[black] {$\frac{1}{4}$}}
        }
    };
\coordinate (center) at ($(n1)!.5!(n2)$);  %
\node [draw=blue, dotted, very thick, ellipse, minimum width=1.8cm, minimum height=2.3cm, rotate=-45, anchor=center] at (center) {};  %
\node [dashed, draw=black, very thick, fit=(n3), inner sep=0.5pt] {};
\node [draw=darkgreen!50!green, very thick, fit=(n4), inner sep=0.5pt] {};

\end{tikzpicture}
\caption{The core $T'=\core(T,\sfrac12)$ of the underlying tree $T$ at threshold $\gamma=\sfrac12$ is highlighted in red,  where $y_v$ is shown for each node $v \in V[T]$. We emphasize the different situations that can occur in terms of cuts: For the dotted blue cut $S$, $x(\delta(S))$ may be small as it does not contain a vertex with $y_v \ge \sfrac12$, however there are at least three edges of $T'$ in $\delta(S)$ that make up for this. The dashed black cut is a cut $S$ with only one tree edge in $\delta(S)$, however it contains a vertex with $y_v \ge \sfrac12$ so $x(\delta(S))$ is large. Finally, for the solid green cut, $x(\delta(S))$ may be small and there are only two tree edges in $\delta(S)$, however for this cut, $|\delta_{T'}(S)|$ is even, so there is no corresponding constraint in the dominant of the $\odd(T')$-join polytope.}
\label{fig:core}
\end{figure}

Indeed, if $T'=\core(T,\gamma)$, we know that for any non-empty cut $S\subsetneq V[T']$, we either have $x(\delta(S))\geq 2\gamma$, or $|\delta_T(S)|>1$.
Additionally, the only relevant such cuts $S$ in terms of parity correction on $T'$ are those with $|S\cap \odd(T')|$ odd, which is well-known to be equivalent to $|\delta_{T'}(S)|$ being odd because
\begin{align}
|S\cap \odd(T')|
\equiv \sum_{v\in S}\deg_{T'}(v)
= 2 \cdot \left|E[T']\cap\binom{S}{2}\right| + |\delta_{T'}(S)|
\equiv |\delta_{T'}(S)| \pmod{2}\enspace.\label{eq:odd}
\end{align}
Thus, for cuts $S$ with $|S\cap \odd(T')|$ odd, $|\delta_{T'}(S)|>1$ implies $|\delta_{T'}(S)|\geq3$, thereby further boosting the load on $\delta(S)$ in $z$ for this case (also see \cref{fig:core} for examples of the different types of cuts that may appear).
Altogether, we are now ready to state our new algorithm for \PCTSP, \cref{alg:improved_algo}.

\begin{algorithm2e}[!ht]
\caption{Our new approximation algorithm for \PCTSP\strut}\label{alg:improved_algo}
\SetKw{Input}{Input:}
\Input{\PCTSP{} instance $(G,r,c,\pi)$ on $G=(V,E)$, $\drop \in [0,1)$.}
\smallskip
\begin{stepsarabic}

\item\label{step:sol} Compute an optimal solution $(x^*, y^*)$ of the \ref{eq:rel_PCTSP}.

\item\label{algitem:split} Perform complete splittings at all $v \in V$ with $y_v < \delta$ (see \cref{thm:complete_splitting_pctsp}), resulting in a feasible solution $(x, y)$ of the \ref{eq:rel_PCTSP}.

\item\label{algitem:tree_decomp} Compute a set $\mathcal{T}$ of trees through \cref{lem:tree_decomposition} applied to $(x,y)$.

\item\label{algitem:trees} Let
$$
\mathcal{T}' = \bigcup_{\gamma \in \{y_v\colon v\in V\}} \{\core(T, \gamma)\colon T\in\mathcal{T}\}\enspace.
$$
\end{stepsarabic}
\Return{
Best tour found by doing parity correction on all trees in $\mathcal{T}'$.
}
\end{algorithm2e}

We remark that \cref{alg:improved_algo} can be implemented to run in polynomial time. Indeed, by \cref{lem:tree_decomposition} and \cref{thm:complete_splitting_pctsp}, \cref{algitem:split,algitem:tree_decomp,algitem:trees} run in polynomial time. Moreover, we can compute an optimum solution to the \ref{eq:rel_PCTSP} in polynomial time as the seperation problem can be solved by computing minimum $r$-$v$ cut sizes for each $v \in V \setminus \{r\}$.

We show in the next section that there is a constant $\delta$ for which \cref{alg:improved_algo} is a $\sfrac{\smash{(1+\sqrt{5})}}{2}$-approximation algorithm.
To go beyond that and prove \cref{thm:main}, we will later allow an instance-specific choice of $\delta$.

\section[A \texorpdfstring{$\sfrac{{(1+\sqrt{5})}}{2}$-}{golden ratio }approximation guarantee for PCTSP]{\boldmath A $\sfrac{\smash{(1+\sqrt{5})}}{2}$-approximation guarantee for \PCTSP}\label{sec:golden_ratio}

In this section, we prove the following result that gives the golden ratio as approximation guarantee for \PCTSP.
This is slightly weaker than what \cref{thm:main} claims, but the proof is simple and illustrates our main ideas.

\begin{theorem}\label{thm:weak}
\cref{alg:improved_algo} is an $\alpha$-approximation algorithm for \PCTSP with
\[
\alpha \coloneqq \max \left\{ \frac{5-2\delta}{3-\delta}, \frac{3-\delta}{2-\delta}, \frac{1}{1-\delta} \right\} \enspace.
\]
In particular, for $\delta=\sfrac{3-\sqrt5}{2}\approx0.382$, we get $\alpha=\sfrac{(1+\sqrt{5})}{2}\approx 1.618$.
\end{theorem}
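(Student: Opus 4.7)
The plan is a randomized analysis. I would draw $T \sim \mu$ from the tree decomposition in \cref{lem:tree_decomposition} and, independently, a threshold $\gamma$ from a specific distribution $F$ on $[\delta, 1]$. Since \cref{alg:improved_algo} returns the best tour among all pairs $(T, \gamma)$ it tries, bounding the expected cost under this randomization upper bounds the algorithm's output.

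First, I would bound the expected penalty by choosing the density $f(\gamma) = \delta/((1-\delta)\gamma^2)$ on $[\delta, 1]$, which corresponds to the CDF $F(y) = (y-\delta)/((1-\delta) y)$. For a vertex $v$ with $y_v \ge \delta$, one has $v \in V[T']$ whenever $v \in V[T]$ and $\gamma \le y_v$; by independence and \cref{lem:tree_decomposition} these events jointly occur with probability at least $y_v \cdot F(y_v)$. A short algebraic simplification then yields $\E{\text{penalty for } v} \le \pi_v(1 - y_v F(y_v)) = \pi_v(1-y_v)/(1-\delta)$. Combined with the at-most-$1/(1-\delta)$ blow-up in penalty for split-off vertices (those with $y_v^* < \delta$, whose $y_v$ becomes $0$), the total expected penalty is at most $\pi^\transpose(1-y^*)/(1-\delta)$, matching the third term in $\alpha$.

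Second, I would bound the expected tree-plus-parity-correction cost by exhibiting, for each realized $(T, \gamma)$, a feasible $z = \alpha_1(\gamma)\, x + \beta(\gamma)\, \chi^{E[T']} \in \PQJ[\odd(T')]$ as an upper bound on the parity correction cost. Two cut types are relevant: for cuts $S$ with $|\delta_{T'}(S)| = 1$, the core's minimality forces a vertex with $y \ge \gamma$ on each side of the unique tree edge, so $x(\delta(S)) \ge 2\gamma$, giving the constraint $2\alpha_1 \gamma + \beta \ge 1$; for cuts $S$ with $|\delta_{T'}(S)| \ge 3$, splitting off ensures $x(\delta(S)) \ge 2\delta$, giving $2\alpha_1 \delta + 3\beta \ge 1$. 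Making both constraints tight yields $\alpha_1 = 1/(3\gamma-\delta)$ and $\beta = (\gamma-\delta)/(3\gamma-\delta)$, so the tree-plus-parity-correction cost is at most $(4\gamma-2\delta)/(3\gamma-\delta) \cdot c(E[T']) + c^\transpose x/(3\gamma-\delta)$. Taking $\EE{T}{c(E[T'])} \le c^\transpose x$ (from \cref{lem:tree_decomposition}) and evaluating at $\gamma = 1$ gives $(5-2\delta)/(3-\delta) \cdot c^\transpose x$, the first term of $\alpha$. The alternative feasible choice $(\alpha_1, \beta) = (0, 1)$ yields tree-plus-parity-correction cost at most $2\, c(E[T'])$; combined with the shrinkage of $c(E[T'])$ under the core pruning at small $\gamma$, this produces the bound $(3-\delta)/(2-\delta) \cdot c^\transpose x$, the second term of $\alpha$.

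Combining the three bounds gives expected cost at most $\alpha \cdot (c^\transpose x^* + \pi^\transpose(1-y^*))$, and at $\delta = (3-\sqrt{5})/2$ all three terms coincide at $(1+\sqrt{5})/2$. The hard part will be arguing the two cost bounds simultaneously under a single distribution $F$: one must show that at every $\gamma$ at least one of the two feasible $(\alpha_1, \beta)$-choices is cheap enough, effectively interpolating between the ``both-tight'' regime (favored near $\gamma = 1$, giving $(5-2\delta)/(3-\delta)$) and the boundary regime $(0,1)$ (favored near $\gamma = \delta$, giving $(3-\delta)/(2-\delta)$), and to verify this under the chosen density $f$. Since the algorithm picks the cheapest tour over all pairs $(T, \gamma) \in \mathcal{T} \times \{y_v \colon v\in V\}$ it tries, this expected-cost bound immediately translates into the claimed worst-case guarantee.
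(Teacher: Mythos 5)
Your skeleton matches the paper's (random tree from \cref{lem:tree_decomposition}, random threshold, core pruning, parity correction bounded via a point $z=\alpha_1 x+\beta\chi^{E[T']}$, case split on $|\delta_{T'}(S)|=1$ versus $\geq 3$), and your penalty computation with the CDF $F(t)=\frac{t-\delta}{(1-\delta)t}$ is fine. The genuine gap is in the tour-length bound, which is exactly the part you defer to ``the hard part.'' Using a \emph{single} coefficient $\beta(\gamma)$ and only the bound $x(\delta(S))\geq 2\gamma$ for one-edge cuts is too weak: edges deep in the core (those surviving for almost every $\gamma$) then get charged $1+\beta(\gamma)$ plus the $x$-part $\frac{1}{3\gamma-\delta}$ even when $\gamma$ is small, where this is expensive. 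Under your density $f(\gamma)=\frac{\delta}{(1-\delta)\gamma^2}$ (which is forced to put a lot of mass near $\gamma=\delta$ precisely to make the penalty factor $\frac{1}{1-\delta}$), the expected tour bound works out to roughly $1.9\,c^\transpose x$ at $\delta\approx0.382$, not $\frac{5-2\delta}{3-\delta}c^\transpose x\approx 1.618\,c^\transpose x$; ``evaluating at $\gamma=1$'' is not legitimate since $\gamma$ is random (and choosing $\gamma=1$ deterministically destroys the penalty bound). Your fallback for small $\gamma$ also fails: after the splitting-off preprocessing every vertex of every tree $T_0$ has $y_v\geq\delta$, so $\core(T_0,\gamma)=T_0$ for $\gamma$ near $\delta$ --- there is no ``shrinkage,'' and the $(\alpha_1,\beta)=(0,1)$ option just gives the factor-$2$ doubling bound there. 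Relatedly, you misattribute the term $\frac{3-\delta}{2-\delta}$: in the paper it comes from the \emph{penalty} analysis (it is $\frac{3-\delta}{3-\delta-\kappa}$ with $\kappa=1$), not from the tour length.

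The missing idea is the layered join vector of \cref{lem:length}: decompose the core into layers $E_i$ by the nested cores at the values $\eta_i\in\{y_v\}$, and take $z=\frac{1}{3-\delta}x+\sum_{i:\eta_i\geq\gamma}\bigl(1-\frac{2\eta_i}{3-\delta}\bigr)\chi^{E_i}$, using that a one-edge cut whose unique tree edge lies in layer $i$ satisfies the stronger bound $x(\delta(S))\geq 2\eta_i$ (not merely $2\gamma$). Paired with the CDF $\Prob[\gamma\leq t]=\frac{3-\delta-\kappa}{3-\delta-t}$ from \eqref{eq:cdf}, the product $\Prob[\gamma\leq\eta_i]\cdot\bigl(2-\frac{2\eta_i}{3-\delta}\bigr)$ becomes constant across layers, which is what makes the expected tour length collapse to $\frac{5-2\delta}{3-\delta}c^\transpose x$ while the same distribution still gives penalty factors $\frac{3-\delta}{2-\delta}$ and $\frac{1}{1-\delta}$. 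Without this per-layer coefficient choice (or some equivalent mechanism coupling the coefficient of a tree edge to the $y$-value it ``protects''), no choice of threshold distribution makes your two bounds hold simultaneously, so as written the proof does not go through.
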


Throughout this section, we fix a solution $(x,y)$ of the \ref{eq:rel_PCTSP} that was obtained from an optimal solution $(x^*, y^*)$ through complete splittings as in \cref{algitem:split} of \cref{alg:improved_algo}, and we fix a set $\mathcal{T}$ of trees with weights $(\mu_T)_{T\in\mathcal{T}}$ that is obtained in \cref{algitem:tree_decomp}, i.e., through \cref{lem:tree_decomposition} applied to $(x,y)$.
Moreover, we sample a random tree $T$ from the set $\mathcal{T}'$ constructed in \cref{algitem:trees} of \cref{alg:improved_algo} as follows:
For a fixed value $\kappa\in [\delta, 1]$, sample a threshold $\gamma\in [\delta, \kappa]$ such that for any $t\in [\delta, \kappa]$, we have
\begin{equation}\label{eq:cdf}
\Prob[\gamma \leq t] = \frac{3-\delta-\kappa}{3-\delta-t}\enspace.
\end{equation}
Independently, sample a tree $T_0\in\mathcal{T}$ with marginals given by $(\mu_T)_{T\in\mathcal{T}}$.
Then, define
\begin{equation}\label{eq:random_tree}
T \coloneqq \core(T_0,\gamma)\enspace.
\end{equation}
By definition of the core, it is clear that $T\in\mathcal{T}'$ even if $\gamma\not\in\{y_v\colon v\in V\}$.
To prove \cref{thm:weak}, we bound the expected cost of a tour constructed from $T$ by parity correction.
We remark that for proving \cref{thm:weak}, we only need $\kappa=1$; we nonetheless proceed in this generality here to be able to reuse some of the following statements in a proof of \cref{thm:main}.
To start with, we bound the expected tour length.

\begin{lemma}\label{lem:length}
Let $T=\core(T_0, \gamma)$ be a random tree generated as described in and above \eqref{eq:random_tree}, and let $C$ be the cycle obtained through parity correction on $T$ and shortcutting an Eulerian walk in the resulting graph.
Then
$$
\Exp[c(E[C])] \leq \frac{7-2\delta-2\kappa}{3-\delta}\cdot c^{\transpose} x^*\enspace.
$$
\end{lemma}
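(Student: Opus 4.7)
My plan is to bound $c(E[C]) \le c(E[T]) + c^\top z$ for a well-chosen $z \in \PQJ[\odd(T)]$ and then to evaluate the expectation edge-by-edge using the CDF~\eqref{eq:cdf}. For each edge $e \in E[T_0]$, I introduce
\[
y^*_e \coloneqq \min\Big\{\max_{v \in V[T_0^1]} y_v,\ \max_{v \in V[T_0^2]} y_v\Big\},
\]
where $T_0^1, T_0^2$ are the two components of $T_0 \setminus \{e\}$. Note $y^*_e \in [\delta, 1]$: one side contains $r$ with $y_r = 1$, while the other contains only non-root vertices of $T_0$, each with $y_v \ge \delta$ after splitting off. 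Minimality of the core translates to $e \in E[T]$ iff $\gamma \le y^*_e$. I will work with
\[
z \coloneqq \frac{1}{3-\delta}\, x \;+\; \sum_{e \in E[T]} \frac{3-\delta-2 y^*_e}{3-\delta}\, \chi^{\{e\}}.
\]

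To verify $z \in \PQJ[\odd(T)]$, I apply the parity identity~\eqref{eq:odd} to reduce to cuts $S$ with $|\delta_T(S)|$ odd. If $|\delta_T(S)| = 1$, say $\delta_T(S) = \{e\}$, then the LP cut constraint applied to a max-$y$ witness on the non-root side of the cut yields $x(\delta(S)) \ge 2 y^*_e$, so $z(\delta(S)) \ge \sfrac{2 y^*_e}{3-\delta} + \sfrac{3-\delta-2 y^*_e}{3-\delta} = 1$. If $|\delta_T(S)| \ge 3$, the splitting-off preprocessing ensures $x(\delta(S)) \ge 2\delta$, while the uniform bound $\beta_e \ge \sfrac{1-\delta}{3-\delta}$ (coming from $y^*_e \le 1$) summed over the at least three tree edges in $\delta(S)$ gives $z(\delta(S)) \ge \sfrac{2\delta + 3(1-\delta)}{3-\delta} = 1$.

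With $z$ feasible, the deterministic inequality
\[
c(E[C]) \le c(E[T]) + c^\top z = \frac{c^\top x}{3-\delta} + \sum_{e \in E[T]} c_e \cdot \frac{6-2\delta-2y^*_e}{3-\delta}
\]
holds. The key telescoping, precisely what the CDF~\eqref{eq:cdf} is engineered to enable, is that for each $e \in E[T_0]$ with $y^*_e \in [\delta, \kappa]$,
\[
(3-\delta-y^*_e) \cdot \Prob[\gamma \le y^*_e] = (3-\delta-y^*_e) \cdot \frac{3-\delta-\kappa}{3-\delta-y^*_e} = 3-\delta-\kappa,
\]
while for $y^*_e \in (\kappa, 1]$ we have $\Prob[\gamma \le y^*_e] = 1$ and $3-\delta-y^*_e \le 3-\delta-\kappa$. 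In either regime, $\mathbb{E}_\gamma[(6-2\delta-2y^*_e)\,\mathbbm{1}[e \in E[T]]] \le 2(3-\delta-\kappa)$ per edge. Summing over $e \in E[T_0]$, then taking expectation over $T_0$ using $\mathbb{E}_{T_0}[c(E[T_0])] \le c^\top x$ from \cref{lem:tree_decomposition} together with $c^\top x \le c^\top x^*$ from \cref{thm:complete_splitting_pctsp}, yields $\mathbb{E}[c(E[C])] \le \frac{1 + 2(3-\delta-\kappa)}{3-\delta} c^\top x^* = \frac{7-2\delta-2\kappa}{3-\delta} c^\top x^*$.

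The hard part will be the feasibility check in the $|\delta_T(S)| \ge 3$ case, where the splitting-off bound $x(\delta(S)) \ge 2\delta$ and the uniform lower bound $\beta_e \ge \sfrac{1-\delta}{3-\delta}$ must cooperate to land exactly at $1$; once the right combination $z$ is identified, the telescoping identity in the expectation is precisely what the specific CDF~\eqref{eq:cdf} was designed to yield.
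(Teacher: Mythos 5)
Your proposal is correct and follows essentially the same route as the paper: the same combination $z=\frac{1}{3-\delta}x+\sum\beta_e\chi^{\{e\}}$ (your per-edge level $y^*_e$ coincides with the paper's shell value $\eta_i$ for $e\in E_i$, so the two $z$'s are identical), the same case split via \eqref{eq:odd} into $|\delta_T(S)|=1$ (using $x(\delta(S))\ge 2y^*_e$) and $|\delta_T(S)|\ge 3$ (using $x(\delta(S))\ge 2\delta$ and $y^*_e\le 1$), and the same telescoping of the CDF \eqref{eq:cdf} against the coefficient $6-2\delta-2y^*_e$ before averaging over $T_0$ via \cref{lem:tree_decomposition} and applying $c^\transpose x\le c^\transpose x^*$.
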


\begin{proof}
Let $\eta_1>\ldots>\eta_k$ such that $\{\eta_1,\ldots,\eta_k\}=\{y_v\colon v\in V\}$.
Define
$$
E_i \coloneqq \begin{cases}
E[\core(T_0,1)]& \text{for } i = 1\\
E[\core(T_0,\eta_i)]\setminus E[\core(T_0, \eta_{i-1})] & \text{for } i \in\{2,\ldots,k\}
\end{cases}\enspace.
$$
We refer to \cref{fig:Ei} for an illustration of the sets $E_i$.
\definecolor{aoe}{rgb}{0.0,0.42,0.24}%
\begin{figure}[tb]%
\centering %
\begin{tikzpicture}[
    level 1/.style={sibling distance=40mm, level distance=10mm},
    level 2/.style={sibling distance=20mm, level distance=10mm},
    level 3/.style={sibling distance=10mm, level distance=10mm},
    every node/.style = {shape=circle, draw, align=center, minimum size=0.7cm, inner sep=0pt},
    every path/.style = {semithick}]

\node[red] {$r$}
    child [edge from parent/.style={draw=red}] {node[red] {$\frac{1}{4}$}
        child [edge from parent/.style={draw=blue}] {node[blue]  {$\frac{1}{4}$}
            child {node[blue] {$\frac{1}{2}$}}
            child [edge from parent/.style={draw=aoe}] {node[aoe] {$\frac{1}{4}$}}
        }
        child  {node[red] {$1$}}
    }
    child [edge from parent/.style={draw=red}] {node[red] {$\frac{1}{4}$}
        child {node[red] {$1$}}
        child [edge from parent/.style={draw=aoe}] {node[aoe] {$\frac{1}{4}$}
            child [edge from parent/.style={draw=aoe}] {node[aoe] {$\frac{1}{4}$}}
            child [edge from parent/.style={draw=aoe}] {node[aoe] {$\frac{1}{4}$}}
        }
    };
\end{tikzpicture}%
\caption{%
Here, the variables $\eta_i$ are given by the sequence $1$, $\sfrac12$, $\sfrac14$.
Edges in $E_1$ are drawn in red, those in $E_2$ in blue, and those in $E_3$ in green.
We also color the nodes for intuition, however the sets $E_i$ consist solely of edges.
}%
\label{fig:Ei}
\end{figure}
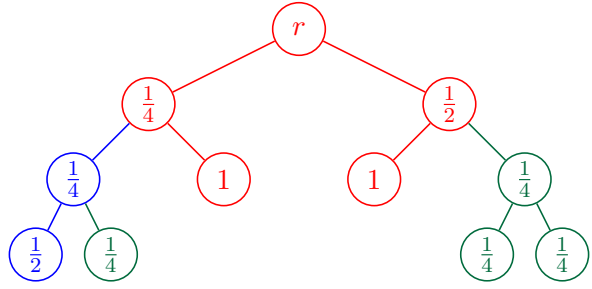%
In particular, this definition implies that
\begin{equation}
\label{eq:cost_T}
c(E[T]) = c\big(E[\core(T_0, \gamma)]\big) = \sum_{i\in[k]\colon \eta_i \geq \gamma} c(E_i) \enspace.
\end{equation}
To bound the cost of parity correction on $T$, we claim that
$$
z \coloneqq \frac1{3-\delta} \cdot x + \sum_{i \in [k] \colon \eta_i \geq \gamma} \left(1-\frac{2\eta_i}{3-\delta}\right) \cdot \chi^{E_{i}}
$$
lies in the dominant of the $\odd(T)$-join polytope.
This implies that a shortest $\odd(T)$-join $J$ has length
$$
c(E[J])) \leq \frac1{3-\delta} \cdot c^{\transpose} x + \sum_{i \in [k] \colon \eta_i \geq \gamma} \left(1-\frac{2\eta_i}{3-\delta}\right) \cdot c(E_{i})\enspace.
$$
Combining this with \eqref{eq:cost_T} and taking expectations immediately gives the desired bound:
\begin{align*}
\Exp[c(E[C])]
& \leq \frac1{3-\delta} \cdot c^{\transpose} x + \sum_{T_0\in\mathcal{T}} \mu_{T_0}\sum_{i=1}^k\Prob[\gamma\leq\eta_i]\cdot \left(2-\frac{2\eta_i}{3-\delta} \right) \cdot c(E_{i})\\
\tag{using \eqref{eq:cdf}}
& = \frac1{3-\delta} \cdot c^{\transpose} x + \sum_{T_0\in\mathcal{T}} \mu_{T_0}\sum_{i=1}^k\frac{6-2\delta-2\kappa}{3-\delta} \cdot c(E_{i})\\
& = \frac1{3-\delta} \cdot c^{\transpose} x + \frac{6-2\delta-2\kappa}{3-\delta}\cdot \sum_{T_0\in\mathcal{T}} \mu_{T_0}c(E[T_0])\\
\tag{using \cref{lem:tree_decomposition}}
& \leq \frac{7-2\delta-2\kappa}{3-\delta}\cdot c^{\transpose} x\enspace.
\end{align*}

By the construction of $(x,y)$ from $(x^*, y^*)$ through splitting off, we have $c^\top x\leq c^\top x^*$, hence the above implies the desired.
It thus remains to show that $z\in \PQJ[\odd(T)]$, i.e., that $z(\delta(S))\geq 1$ for every $S\subseteq V$ with $|S \cap \odd(T)|$ odd.
As remarked in \eqref{eq:odd}, $|S \cap \odd(T)|$ is odd if and only if $|\delta_{T}(S)|$ is.
If $|\delta_{T}(S)| \ge 3$,
\begin{align*}
z(\delta(S))
= \frac1{3-\delta} \cdot x(\delta(S)) + \sum_{i \in [k]\colon \eta_i \ge \gamma} \left(1-\frac{2\eta_i}{3-\delta}\right)\cdot |E_{i}\cap \delta_{T}(S)|
\ge \frac{2\delta}{3-\delta}+ \left(1-\frac{2}{3-\delta}\right)\cdot 3
= 1 \enspace,
\end{align*}
where we used that $x(\delta(S))\geq 2\delta$ because $y_v\geq \delta$ for all $v\in S$, and that $\eta_i\leq 1$ for each $i \in [k]$.
Otherwise, we have $|\delta_{T}(S)| = 1$.
Let $i \in [k]$ such that $\delta_T(S)\subseteq E_i$.
Then $x(\delta(S))\geq 2\eta_i$, hence also in this case
\[
z(\delta(S)) \ge \frac{2 \eta_i}{3-\delta} + \left(1-\frac{2\eta_i}{3-\delta}\right) = 1 \enspace.\qedhere
\]
\end{proof}

Next, we analyze the expected penalty incurred when starting with such a random tree $T$.

\begin{lemma}\label{lem:penalty}
	Let $T=\core(T_0, \gamma)$ be a random tree generated as described in and above \eqref{eq:random_tree}.
    Then, for every $v \in V$, we have
	\begin{equation*}
	\Prob[v \in V[T]] \geq \begin{cases}
		0 & \text{if } y^*_v \in [0,\delta) \\
		y^*_v \cdot \frac{3-\delta-\kappa}{3-\delta-y^*_v} & \text{if }  y^*_v \in [\delta, \kappa] \\
		y^*_v & \text{if } y^*_v \in (\kappa,1]
	\end{cases}\enspace.
	\end{equation*}
\end{lemma}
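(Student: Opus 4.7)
The plan is to lower-bound the event $\{v\in V[T]\}$ by the simpler event $\{v\in V[T_0] \text{ and } \gamma \leq y_v\}$. Indeed, by the definition of the core, whenever $v\in V[T_0]$ and $y_v\geq \gamma$, the minimal subtree spanning all vertices of $V[T_0]$ with $y\geq\gamma$ must contain $v$ itself, so $v\in V[\core(T_0,\gamma)]=V[T]$.

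First, I would split into cases based on $y^*_v$. If $y^*_v<\delta$, then the complete splitting at $v$ performed in \cref{algitem:split} of \cref{alg:improved_algo} yields $y_v=0$, so \cref{lem:tree_decomposition} applied to $(x,y)$ gives $\Prob[v\in V[T_0]]=y_v=0$. Since $V[T]\subseteq V[T_0]$, the bound $\Prob[v\in V[T]]\geq 0$ holds trivially.

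If $y^*_v\geq\delta$, then \cref{thm:complete_splitting_pctsp} ensures that the splittings---which are only performed at vertices with $y$-value below $\delta$---do not alter $y_v$, so $y_v=y^*_v$. Using the sufficient condition above together with the independence of $T_0$ and $\gamma$, I obtain
\[
\Prob[v\in V[T]] \;\geq\; \Prob[v\in V[T_0],\,\gamma\leq y_v] \;=\; \Prob[v\in V[T_0]]\cdot \Prob[\gamma\leq y_v] \;=\; y_v\cdot \Prob[\gamma\leq y_v],
\]
where the final equality uses \cref{lem:tree_decomposition}. It then remains to evaluate $\Prob[\gamma\leq y_v]$ in the two remaining subcases: if $y_v\in[\delta,\kappa]$, equation \eqref{eq:cdf} directly gives $\Prob[\gamma\leq y_v]=\frac{3-\delta-\kappa}{3-\delta-y_v}$; and if $y_v>\kappa$, then $\gamma\leq\kappa<y_v$ with probability one, so $\Prob[\gamma\leq y_v]=1$. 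Assembling these cases yields the three bounds in the lemma.

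I do not anticipate a substantive obstacle. The key simplification is that, since only a lower bound is required, it suffices to use the easy sufficient condition $y_v\geq\gamma$ rather than the full characterization of when $v$ lies in the core, which would also include vertices lying on paths between high-$y$ descendants in $T_0$ and would therefore depend on the random shape of $T_0$ in a complicated way.
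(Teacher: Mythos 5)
Your proposal is correct and matches the paper's own proof essentially step for step: both lower-bound $\Prob[v\in V[T]]$ by $\Prob[v\in V[T_0]]\cdot\Prob[\gamma\le y_v^*]$ (using that a vertex with $y_v\ge\gamma$ lying in $T_0$ must be in the core, and the independence of $T_0$ and $\gamma$), handle the case $y_v^*<\delta$ via the splitting-off preprocessing, use \cref{lem:tree_decomposition} for $\Prob[v\in V[T_0]]=y_v^*$, and evaluate $\Prob[\gamma\le y_v^*]$ via \eqref{eq:cdf} or as $1$ when $y_v^*>\kappa$. Your extra explicitness about the sufficient condition for core membership and the independence is fine and introduces no gap.
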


\begin{proof}
By construction, the solution $(x,y)$ has the property that for all $v\in V$, either $y_v=0$ or $y_v=y_v^*\geq \delta$.
Consequently, no tree $T_0$ in the family $\mathcal{T}$ generated through \cref{lem:tree_decomposition} contains vertices $v\in V$ with $y_v^* < \delta$, and thus the same holds for $T$.
Consequently, for such vertices $v\in V$, we get $\Prob[v \in V[T]] = 0$.
For vertices $v\in V$ with $y_v=y_v^*\geq \delta$, we have $\Prob[v\in V[T_0]]=y_v^*$ by \cref{lem:tree_decomposition}.
Hence,
\begin{equation*}
\Prob[v \in V[T]] \geq \Prob[v\in V[T_0]] \cdot \Prob[\gamma \le y_v^*] = y_v^* \cdot
\Prob[\gamma \le y_v^*] \enspace.
\end{equation*}
If $y_v^*>\kappa$, then $\Prob[\gamma \le y_v^*] = 1$ and $\Prob[v \in V[T]]=y_v^*$.
In the remaining case $y_v^*\in[\delta,\kappa]$, we use \eqref{eq:cdf} to obtain the desired.
\end{proof}

Together, \cref{lem:length,lem:penalty} allow us to conclude \cref{thm:weak}.

\begin{proof}[Proof of \cref{thm:weak}]
Let $T$ be a random tree generated as described in and above \eqref{eq:random_tree}, and let $C$ be the cycle obtained through parity correction on $T$ and shortcutting an Eulerian walk in the resulting graph.

Let $v \in V$. By \cref{lem:penalty}, if $y^*_v < \delta$, then $\Prob[v \notin V[C]] =1 \le \frac{1}{1-\delta} (1-y^*_v)$; if $y^*_v > \kappa$, then $\Prob[v \notin V[C]] \leq 1-y^*_v$.
If $\delta \le y^*_v \le \kappa$ then, again by \cref{lem:penalty},
\begin{align*}
	\Prob[v \notin V[C]]
	&\leq 1 - y_v^* \cdot \frac{3-\delta-\kappa}{3-\delta - y_v^*} \\
	&= \frac{(3-\delta)(1-y_v^*)-y_v^*(1-\kappa)}{3-\delta-y_v^*} \\
    \tag{because $y_v^*\leq\kappa\leq 1$}
	&\leq \frac{3-\delta}{3 - \delta - \kappa} \cdot (1- y_v^*) \enspace.
\end{align*}
Hence, together with \cref{lem:length} we get
\begin{align}
\nonumber
\Exp[c(E[C])  + \pi(V\setminus V[C])] & \leq
\frac{7-2\delta-2\kappa}{3-\delta}\cdot c^{\transpose} x^* + \max\left\{\frac{3-\delta}{3-\delta-\kappa}, \frac{1}{1-\delta}\right\}\cdot \pi^{\transpose} (1-y^*) \\
\label{eq:guarantee_kappa}&\leq
\max\left\{\frac{7-2\delta-2\kappa}{3-\delta}, \frac{3-\delta}{3-\delta-\kappa}, \frac{1}{1-\delta}\right\}\cdot \big(c^{\transpose} x^* + \pi^\transpose (1-y^*)\big) \enspace.
\end{align}
Independently of the realization of the involved random variables, the cycle $C$ is one that is generated in \cref{alg:improved_algo}.
The maximum in \eqref{eq:guarantee_kappa} is minimized for $\kappa = 1$ and $\delta=\sfrac{(3-\sqrt5)}{2}$, where it evaluates to $\sfrac{(1+\sqrt{5})}{2}$, thus giving the guarantee claimed in \cref{thm:weak}.
\end{proof} %

\section[Getting below \texorpdfstring{$1.6$}{1.6} for PCTSP]{\boldmath Getting below $1.6$ for \PCTSP}\label{sec:below1.6}

To improve upon the golden ratio approximation guarantee that we proved in \cref{sec:golden_ratio}, we exploit some remaining flexibility in the proof:
Sampling not only the threshold $\gamma$ but also $\delta$ from a distribution allows for trading off costs better than before.
The choice of distribution here is not best possible (though close to best possible, see \cref{rem:optimal_distr}), but designed to demonstrate that the true approximability of \PCTSP is below $1.6$ in a way that reduces the use of computers to the evaluation of a ``simple'' function that does not involve integrals.
To still obtain a deterministic algorithm, we also show how actually sampling $\delta$ can be avoided by trying polynomially many instance-dependent values.

\begin{proof}[Proof of \cref{thm:main}]
	For constants $\kappa_0$ and $\kappa$ to be fixed later, we sample $\delta\in[\kappa_0, \kappa]$ from a distribution with density
	$
	f(\delta) = \nu \cdot (3-\delta)(\kappa-\delta)^{2.2}
	$,
	where
	\begin{align*}
		\nu = \left(\int_{\kappa_0}^{\kappa} (3-\delta)(\kappa-\delta)^{2.2}\dd \delta \right)^{-1}
		= \left(\frac{(3-\kappa)(\kappa-\kappa_0)^{3.2}}{3.2} + \frac{(\kappa-\kappa_0)^{4.2}}{4.2}\right)^{-1}\enspace.
	\end{align*}
	Using this $\delta$, sample a tree $T$ as described in and above \eqref{eq:random_tree} (using the same $\kappa$ as here), and let $C$ be the cycle generated from parity correction on $T$ and shortcutting an Eulerian walk in the resulting graph.
	Using \cref{lem:length}, we get that the expected length of the cycle is
	\begin{align*}
		\Exp[c(E[C])]
		&\leq \nu \cdot \int_{\kappa_0}^\kappa (7-2\delta-2\kappa)(\kappa-\delta)^{2.2}\dd \delta \cdot c^\top x^* \\
		&= \underbrace{\nu\cdot\left(\frac{(7-4\kappa)(\kappa-\kappa_0)^{3.2}}{3.2} + \frac{2(\kappa-\kappa_0)^{4.2}}{4.2}\right)}_{\eqqcolon g(\kappa, \kappa_0)} {}\cdot c^\top x^*\enspace.
	\end{align*}
	Next, we bound the expected penalty. Let $v \in V$. By \cref{lem:penalty}, if $y^*_v < \kappa_0$, then $\Prob [v \notin V[C]] = 1 \le \frac{1}{1-\kappa_0}\cdot\left(1-y_v^*\right)$; if $y^*_v > \kappa$, then $\Prob [v \notin V[C]] \le 1 - y^*_v$. For $y_v^*\in [\kappa_0, \kappa]$, we again use \cref{lem:penalty} and obtain
	\begin{align*}
		\Prob[v \notin V[C]]
		&\le 1- y_v^* \cdot \nu \cdot \int_{\kappa_0}^{y_v^*} \frac{3-\delta-\kappa}{3-\delta-y_v^*} (3-\delta)(\kappa-\delta)^{2.2}\dd \delta\enspace.
	\end{align*}
	Now observe that for $y_v^* \leq \kappa$ the function $\delta\mapsto \phi(\delta) \coloneqq \frac{(3-\delta-\kappa)(3-\delta)}{3-\delta-y_v^*} = 3-\delta - \kappa + y_v^* - \frac{y_v^* (\kappa - y_v^*)}{3-\delta - y_v^*}$ is concave on $[\kappa_0, \kappa]$, hence for each $\delta \in [\kappa_0, \kappa]$, we have
	$$
	\phi(\delta) \geq
	\phi(\kappa_0)  \cdot \frac{\kappa-\delta}{\kappa-\kappa_0} +
	\phi(\kappa) \cdot \frac{\delta-\kappa_0}{\kappa-\kappa_0}\enspace.
	$$
	Plugging in this bound and evaluating the involved integrals gives
	\begin{align*}
		\Prob[v\notin V[C]] &\leq
		1 - \frac{y_v^* \cdot \nu}{\kappa-\kappa_0} \cdot \Bigg(
		\phi(\kappa_0)  \cdot \int_{\kappa_0}^{y_v^*}(\kappa-\delta)^{3.2}\dd\delta
		+
			\phi(\kappa) \cdot \int_{\kappa_0}^{y_v^*}(\delta-\kappa_0)(\kappa-\delta)^{2.2}\dd\delta
			\Bigg)\\
		&=
		1 - \frac{y_v^* \cdot \nu}{\kappa-\kappa_0} \cdot \Bigg(
		\left(\phi(\kappa_0) - \phi(\kappa) \right)   \cdot \frac{(\kappa-\kappa_0)^{4.2}-(\kappa-y_v^*)^{4.2}}{4.2} \\
		&\qquad\qquad\qquad\qquad\qquad{} +
			\phi(\kappa) \cdot (\kappa - \kappa_0) \cdot
			\frac{(\kappa-\kappa_0)^{3.2}-(\kappa-y_v^*)^{3.2}}{3.2}
			\Bigg) 
		\eqqcolon h_{y_v^*}(\kappa, \kappa_0)\enspace.
	\end{align*}
	Thus
	\begin{align}\label{eq:final_term}
		\Exp[\pi(V\setminus V[C])] \leq \underbrace{\max\left\{\frac{1}{1-\kappa_0}, \max_{y\in [\kappa_0, \kappa]} \frac{h_y(\kappa, \kappa_0)}{1-y}\right\}}_{\eqqcolon h(\kappa, \kappa_0)}{}\cdot \pi^\top (1-y^*)\enspace,
	\end{align}
	and we get a bound on the expected total cost of the form
	$$
	\Exp[c(E[C]) + \pi(V\setminus V[C])] \leq \max\{g(\kappa, \kappa_0), h(\kappa, \kappa_0)\} \cdot \big(c^\top x^* + \pi^\top(1-y^*)\big)\enspace.
	$$
	The latter maximum evaluates to slightly below $1.599$ for $\kappa_0 = 0.3724$ and $\kappa=0.9971$, thus giving the desired guarantee.
	To calculate the maximum in \eqref{eq:final_term}, we use that the derivative of \smash[b]{$y\mapsto \frac{h_y(\kappa, \kappa_0)}{1-y}$} on the interval $[\kappa_0, \kappa]$ can be bounded by a constant.
	Hence, $\max_{y\in [\kappa_0, \kappa]} \frac{h_y(\kappa, \kappa_0)}{1-y}$ can be approximated up to a minor error by evaluating the function for each $y$ in a sufficiently fine discretization of the interval $[\kappa_0, \kappa]$.
	
	Finally, note that the choice of $\delta$ can be derandomized by only trying the instance-specific values in the set $\{y_v ^*\colon v \in V\}$ obtained from the optimal LP solution $(x^*, y^*)$ that is used.
	Indeed, if $\delta \notin\{y_v^*\colon v \in V\}$, the bound on the expected cost of the cycle given in \cref{lem:length} improves (as long as $\kappa \ge \sfrac{1}{2}$) by using the minimal $\delta'$ in $\{y_v^*\colon v \in V, y_v^*\geq \delta\}$, whereas the bound on the expected penalty cost does not change.
\end{proof}

\begin{remark}\label{rem:optimal_distr}
Computational experiments based on discretizing a distribution over pairs $(\delta, \kappa)$ suggest that an analysis following the one in the above proof cannot achieve an approximation ratio of $1.59$.
We emphasize that this does not exclude that the actual approximation guarantee of \cref{alg:improved_algo} is below~$1.59$.
\end{remark} %
\section[Extending our approach to PCS]{Extending our approach to \PCS}\label{sec:pc_stroll}

\subsection[A straightforward \texorpdfstring{$\sfrac53$}{5/3}-approximation]{\boldmath A straightforward $\sfrac53$-approximation}
\label{subsec:PCS_53}

In this section we use the following linear programming relaxation for \PCS that can be obtained by starting with the \ref{eq:rel_PCTSP} and adjusting it to account for the fact that we want to find a stroll instead of a tour:%
\footnote{%
We use $\pi_s=\pi_t=0$ for convenience.
}
\begin{equation}\tag{\PCS LP relaxation}\label{eq:rel_PCS}
\begin{aligned}
\begin{array}{rrcll}
\min & \displaystyle \sum_{e\in E}c_ex_e + \sum_{v\in V}\pi_v (1-y_v)\\
     & x(\delta(v)) & = & 2y_v & \forall v\in V \setminus \{s,t\} \\
     & x(\delta(s)) & = & 1 \\
     & x(\delta(t)) & = & 1 \\
     & x(\delta(S)) & \geq & 2y_v & \forall S\subseteq V\setminus\{s, t\}, v\in S\\
     & x(\delta(S)) & \geq & 1 & \forall S\subseteq V\setminus\{t\}, s\in S\\
     & y_s & = & 1 \\
     & y_t & = & 1 \\
     & x_e & \geq & 0 & \forall e\in E\\
     & y_v & \geq & 0 & \forall v\in V\enspace.
\end{array}
\end{aligned}
\end{equation}
In particular, observe that we require a (fractional) degree of $1$ at $s$ and $t$, reflecting the fact that $s$ and $t$ are the fixed endpoints of the stroll that we are looking for.
Also note that we define $y_s=y_t=1$ to support our interpretation of the $y$-values as the extent to which a vertex is visited by the fractional solution.
Furthermore, in distinction from the \ref{eq:rel_PCTSP}, we enforce at least one unit of $x$-weight on $s$-$t$ cuts, as all of these are crossed at least once by every $s$-$t$ stroll.

As in our algorithm for \PCTSP, we also decompose a solution to the \ref{eq:rel_PCS} into trees.
To this end, we use the following analogue to \cref{lem:tree_decomposition}. 

\begin{lemma}\label{lem:tree_decomposition_PCS}
Let $(x, y)$ be a feasible solution of the \ref{eq:rel_PCS}.
We can in polynomial time compute a set $\mathcal{T}$ of trees that all contain the vertices $s$ and $t$, and weights $\mu \in [0,1]^\mathcal{T}$ such that $\sum_{T\in\mathcal{T}}\mu_T = 1$,
\begin{equation*}
\sum_{T\in \mathcal{T}}\mu_T \cdot \chi^{E[T]} = x\enspace,
\qquad\text{and}\qquad
\forall v\in V\colon
\sum_{T\in\mathcal{T}\colon v\in V[T]}\mu_T = y_v \enspace.
\end{equation*}
\end{lemma}

We remark that a polyhedral description of $s$-$t$ strolls very similar to the above \ref{eq:rel_PCS} along with the decomposition stated in \cref{lem:tree_decomposition_PCS} was recently first used to obtain improved approximation guarantees for another \TSP variation termed \emph{Ordered TSP}~\cite{armbruster_2024_ordered}.
For completeness, we provide a full proof of \cref{lem:tree_decomposition_PCS} in \cref{sec:tree_decomp_proof}. 
The lemma naturally leads to the following adaption of \cref{alg:improved_algo} to~\PCS.

\begin{algorithm2e}[!ht]
\caption{Adapting \cref{alg:improved_algo} to \PCS\strut}\label{alg:PCS}
\SetKw{Input}{Input:}
\Input{\PCS{} instance $(G,s,t,c,\pi)$ on $G=(V,E)$.}
\smallskip
\begin{stepsarabic}

\item\label{step:sol_PCS} Compute an optimal solution $(x, y)$ of the \ref{eq:rel_PCS}.

\item\label{algitem:tree_decomp_PCS} Compute a set $\mathcal{T}$ of trees through \cref{lem:tree_decomposition_PCS} applied to $(x,y)$.

\item\label{algitem:trees_PCS} Let
$$
\mathcal{T}' = \bigcup_{\gamma \in \{y_v\colon v\in V\}} \{\core(T, \gamma)\colon T\in\mathcal{T}\}\enspace.
$$
\end{stepsarabic}
\Return{
Best $s$-$t$ stroll found after parity correction on all trees in $\mathcal{T}'$.
}
\end{algorithm2e}

In the below, we show that \cref{alg:PCS} is a $\sfrac53$-approximation.
At first sight, taking intuition from the classical analyses of Christofides' algorithm for \TSP and our algorithm for \PCTSP, one may expect the approximation ratio of \cref{alg:PCS} for \PCS to be significantly worse when compared to the approximation ratio of \cref{alg:improved_algo} for \PCTSP, as the guarantee of the LP on cuts separating $s$ and $t$ is much weaker.
Thus, our guarantee of $\sfrac53$ may come as a surprise.
The reason we can achieve this can be explained as follows:
Let $z$ be the vector we wish to be in $\PQJ$ for $Q=\odd(T)\triangle\{s,t\}$ when bounding the cost of parity correction for a random tree $T$ from $\mathcal{T}'$.
Then, as in the analysis of \cref{alg:improved_algo} for \PCTSP, we will put additional value on $z_e$ for all $e \in T$.
But the additional challenge for \PCS is cuts separating~$s$ and~$t$ with \textit{even} parity.
This means any such cut gets additional value from \textit{two} edges of $T$, instead of one, the usual worrisome contribution of $T$ to a cut.
This additional boost is exactly what allows us to obtain the approximation ratio~$\sfrac53$.
More precisely, we obtain the following result.

\begin{theorem}\label{thm:PCS_unbalanced}
\cref{alg:PCS} returns an $s$-$t$ stroll $S=(V_S, E_S)$ satisfying
$$
c(E_S) + \pi(V\setminus V_S) \leq \frac53\cdot c^\top x + \frac32\cdot \pi^\top (1-y)\enspace.
$$
\end{theorem}

\begin{proof}
Similar to \cref{thm:weak}, we will analyze selecting a random tree $T$ from the set $\mathcal{T}$ generated in \cref{alg:PCS} precisely as defined in and above \eqref{eq:random_tree}, but for $\delta=0$ and $\kappa=0$.
This time, we perform parity correction to obtain odd degrees at $s$ and $t$, and even degrees elsewhere (i.e., parity correction at vertices in $Q\coloneqq \odd(T)\triangle \{s,t\}$ through a $Q$-join).
We bound the cost of parity correction as in the proof of \cref{lem:length}.
To demonstrate this, we show that the point
$$
z \coloneqq \frac1{3} \cdot x + \sum_{i \in [k] \colon \eta_i \geq \gamma} \left(1-\frac{2\eta_i}{3}\right) \cdot \chi^{E_{i}}
$$
(the same $z$ as in the aforementioned earlier proof, just for $\delta=0$) is feasible for $\PQJ$.
So, we need to show that $z(\delta(S))\geq 1$ for all cuts $S\subsetneq V$ where $|S\cap Q|$ is odd.

For non-empty such cuts $S$ with $S\subseteq V\setminus\{s,t\}$ (and, symmetrically, for those with $\{s,t\}\in S$), the analysis is exactly the same as in the previous proof.
For $s$-$t$ cuts $S$ (and again, by symmetry, for $t$-$s$ cuts $S$), we observe that
$$
|\delta(S)\cap T| \equiv |S\cap\odd(T)| \equiv |S\cap Q| + 1 \pmod{2} \enspace,
$$
hence for any cut with $|S\cap Q|$ odd, there are at least two tree edges in $\delta(S)$.
Therefore,
\begin{align*}
z(\delta(S)) &= \frac13 x(\delta(S)) + \sum_{i \in [k] \colon \eta_i \geq \gamma}\left(1-\frac{2\eta_i}{3}\right) |E_i\cap \delta(S)|
\\
&\geq \frac13 + \sum_{i \in [k] \colon \eta_i \geq \gamma}\frac{1}{3} |E_i\cap \delta(S)|
\tag{using $x(\delta(S))\geq 1$ and $\eta_i\leq 1$}
\\
&\geq 1\enspace.
\tag{using $\sum_{i \in [k] \colon \eta_i \geq \gamma}|E_i\cap \delta(S)| = |T\cap \delta(S)| \geq 2$}
\end{align*}
Consequently, following the proof of \cref{lem:length}, we can obtain an $s$-$t$ stroll $C$ satisfying
$$
\Exp[c([C])] \leq \frac53 \cdot c^\top x\enspace.
$$
Vertices are included in $C$ with the probabilities given by \cref{lem:penalty} for $\delta=0$ and $\kappa=1$, i.e.,
$$
\Prob[v\in V[C]] \geq y_v \cdot \frac{2}{3-y_v}\enspace,
$$
hence the expected penalty we pay can be bounded as follows:
\begin{align*}
\Exp[\pi(V\setminus V[C])]
&\leq \sum_{v\in V} \pi_v \left(1-y_v \cdot \frac{2}{3-y_v}\right)
= \sum_{v\in V} \frac{3\pi_v (1-y_v)}{3-y_v}
\leq \frac32 \cdot \pi^\top (1-y)\enspace.
\end{align*}
Thus, we indeed lose a factor of $\sfrac53$ on the LP edge cost, and a factor $\sfrac32$ on the LP penalty cost, giving us the claimed guarantee.
\end{proof}

\subsection{Combination with threshold rounding}
\label{subsec:PCS_combination}

As stated in \cref{thm:PCS_unbalanced}, our analysis of \cref{alg:PCS} for \PCS exhibits unbalanced losses on the edge cost and penalty terms.
We exploit that in the classical threshold rounding algorithm of \textcite{bienstock_1993_note}, originally introduced for \PCTSP, the threshold parameter can be chosen to give an imbalance in the other direction.
For \PCS, the straightforward adaption of the threshold rounding approach with threshold $\gamma$ is to consider an optimal solution $(x,y)$ of the \ref{eq:rel_PCS}, and return an $s$-$t$ path on the vertex set $\{v\in V\colon y_v\geq\gamma\}$.
\textcite{an_path_tsp} observed that one can obtain the following guarantee.

\begin{proposition}[{\cite[Section~4.1]{an_path_tsp}}]\label{prop:thresholding_PCS}
Let $(x,y)$ be a solution of the \ref{eq:rel_PCS}, and let $\beta'$ denote the LP-relative approximation guarantee of a polynomial-time algorithm for \pathTSP that can be accessed in a black-box way.
For every $\gamma\in(0,1)$, threshold rounding with threshold $\gamma$ results in a \PCS solution of cost at most
\[
\frac{\beta'}{\gamma} \cdot c^\top x + \frac{1}{1-\gamma} \cdot \pi^\top (1-y) \enspace.
\]
\end{proposition}

We recall that the currently best known LP-relative approximation algorithm for \pathTSP has an approximation factor of $\beta'<1.528$ \cite{traub_path_tsp, zhong_path_tsp}, giving enough room for a beneficial combination of the guarantees in \cref{thm:PCS_unbalanced,prop:thresholding_PCS} as follows.

\begin{proof}[Proof of \cref{thm:PCS}]
To start with, we compute an optimal solution $(x,y)$ of the \ref{eq:rel_PCS}.
This can be done in polynomial time, as the separation problem for the \ref{eq:rel_PCS} can be reduced to computing a minimum $s$-$t$ cut, and minimum $\{s,t\}$-$v$ cuts for all $v\in V\setminus \{s,t\}$.
Starting from this fixed LP solution $(x,y)$, we run \cref{alg:PCS} with probability $p=0.9922$, else we do threshold rounding with threshold $\gamma = 0.9561$.
Using \cref{thm:PCS} and \cref{prop:thresholding_PCS}, we get that the resulting solution has expected cost that can be bounded by 
\begin{multline*}
\left(p \cdot \frac53 + (1-p) \cdot \frac{1.528}{\gamma} \right) \cdot c^\top x + \left( p \cdot \frac32 + (1-p) \cdot \frac{1}{1-\gamma} \right) \cdot \pi^\top (1-y) \\
\le 1.6662 \cdot \left( c^\top x + \pi^\top (1-y) \right) \enspace.
\end{multline*}
Clearly, the better of the two computed solutions then also satisfies the above bound.
\end{proof}
\section[Proofs of Lemmas \ref{lem:tree_decomposition} and \ref{lem:tree_decomposition_PCS}]{Proofs of \cref{lem:tree_decomposition,lem:tree_decomposition_PCS}}\label{sec:tree_decomp_proof}

For the sake of this paper being self-contained, we provide proofs of \cref{lem:tree_decomposition,lem:tree_decomposition_PCS}. We do very closely follow the proof in \cite{blauth_2023_improved}, but incorporate some minor simplifications stemming from the fact that \cite{blauth_2023_improved} show a generalized version of the statements needed here.
More precisely, we show that \cref{lem:tree_decomposition} and \cref{lem:tree_decomposition_PCS} are a direct consequence of the following.

\begin{lemma}\label{lem:tree_decomposition_e0}
	Let $(x, y)$ be a feasible solution of the \ref{eq:rel_PCTSP}. Assume that there is an edge $e_0=\{r,v\} \in E$ with $x_{e_0} \ge 1$ and $y_v=1$. 
	We can in polynomial time compute a set $\mathcal{T}$ of trees that all contain the root $r$, and weights $\mu \in [0,1]^\mathcal{T}$ such that $\sum_{T\in\mathcal{T}}\mu_T = 1$,
	\begin{equation*}
		\sum_{T\in \mathcal{T}}\mu_T \cdot \chi^{E[T]} = x - \chi^{\{e_0\}}\enspace,
		\qquad\text{and}\qquad
		\forall v\in V\colon
		\sum_{T\in\mathcal{T}\colon v\in V[T]}\mu_T = y_v \enspace.
	\end{equation*}
\end{lemma}

Indeed, \cref{lem:tree_decomposition_PCS} is a direct consequence of \cref{lem:tree_decomposition_e0}. Starting with a solution of the \ref{eq:rel_PCS}, a solution of the \ref{eq:rel_PCTSP} can be obtained by increasing $x_{\{s,t\}}$ by $1$. Applying \cref{lem:tree_decomposition_e0} to this LP solution with $e_0 = \{s,t\}$ yields the desired tree decomposition. 

Second, to show that \cref{lem:tree_decomposition_e0} also implies \cref{lem:tree_decomposition} we can leverage a trick used, e.g., in \cite{karlin_2021_slightly, blauth_2023_improved} that we shortly discuss for completeness. Given a \PCTSP instance, construct an auxiliary instance by dividing the root vertex $r$ and all incident edges into two (with penalty zero and equal edge costs, respectively), and adding an edge $e_0$ of length $c(e_0)=0$ between the root and its copy.
Note that an LP solution for the original instance can be transformed to an LP solution of the same value for the thereby obtained auxiliary instance (to that we then apply \cref{lem:tree_decomposition_e0}) by setting $y_{v}=1$ for the copy of the root, $x_{e_0}=2-\frac{x(\delta(r))}{2}$, and distributing the $x$-weight on every edge incident to the root equally to its two copies in the auxiliary graph.
(Note that $x_{e_0} > 1$ if $x(\delta(r)) < 2$.)
The obtained set of trees in the auxiliary instance can be transformed to a set of trees in the original instance with the desired properties by contracting $e_0$ and possibly deleting edges of thereby obtained cycles.

It remains to prove \cref{lem:tree_decomposition_e0}.

\begin{proof}[Proof of \cref{lem:tree_decomposition_e0}]
We prove the statement by induction on $|V|$.
If $V = \{r,v\}$, then a feasible decomposition is given by one tree $T$ spanning $V$ with weight $\mu_T=1$.
If $|V|>2$, consider a vertex $z\in V\setminus e_0$ that minimizes $y_z$.
By \cref{thm:complete_splitting_pctsp}, we can efficiently obtain a complete splitting at $z$, i.e., a sequence of splitting operations that result in a solution $(x', y|_{V'})$ of the \ref{eq:rel_PCTSP} over $V'\coloneqq V\setminus \{z\}$.
Consequently, by the inductive assumption, we can in polynomial time compute a set $\mathcal{T}$ of trees and values $\mu_T$ with the desired properties; in particular, for every $v\in V'$,
$$
\sum_{T\in \mathcal{T}\colon v\in V[T]} \mu_T = y_v \enspace
$$
and $\sum_{T\in \mathcal{T}}\mu_T \cdot \chi^{E[T]} = x' - \chi^{\{e_0\}}$. We now undo the splitting operations at $z$ one after another and modify the trees in $\mathcal{T}$ accordingly.
Before we start undoing the splitting operations, we initialize auxiliary variables $\spare{v}=0$ for each $v \in V'$.

Consider a splitting operation on edges $e=\{z,u\}$ and $f=\{z,w\}$ with $e \neq f$ that reduces the weight on each of these edges by $\delta$.
Let $\mathcal{T}'\coloneqq \{T\in \mathcal{T}\colon \{u,w\}\in E[T]\}$.
Note that by $\sum_{T\in \mathcal{T}}\mu_T \cdot \chi^{E[T]} = x' - \chi^{\{e_0\}}$, we have $\sum_{T\in \mathcal{T}'}\mu_T \ge \delta$. (Clearly, this holds if $\{u,w\} \neq e_0$. If $\{u,w\} = e_0$, then $x_{e_0} \ge 1$ before the splitting operation, hence $x_{e_0} \ge 1+\delta$ after the splitting operation.)
If $\sum_{T\in \mathcal{T}'}\mu_T > \delta$, remove trees from $\mathcal{T}'$ until $\sum_{T\in \mathcal{T}'}\mu_T = \delta$ (this may require creating a copy of some tree and splitting its weight).
Now for each $T \in \mathcal{T}'$ do the following:
\begin{enumerate}
\item\label{item:not_contained} If $z \notin V[T]$, remove $\{u,w\}$ from $E[T]$ and add $\{z,u\}$ and $\{z,w\}$ to $E[T]$.

\item\label{item:contained} If $z \in V[T]$, remove $\{u,w\}$ from $E[T]$ and add either $\{z,u\}$ or $\{z,w\}$ to $E[T]$ such that $T$ remains acyclic.
If $\{z,u\}$ is added to $E[T]$ increase $\spare{w}$ by $\mu_T$.
Otherwise, increase $\spare{u}$ by $\mu_T$.
\end{enumerate}
Note that in case \ref{item:not_contained}, the total weight of trees containing $z$ increases by $\mu_T$.
Otherwise, either $\spare{u}$ or $\spare{w}$ increases by $\mu_T$.
Consequently, through the above operations (including the potential initial increase of $\spare{u}$), the sum $\sum_{T \in \mathcal{T}\colon z \in V[T]} \mu_T + \sum_{v \in V'} \spare{v}$ increases by $\delta$.
Since the splitting operation decreased the degree of $z$ by $2\delta$, we get, after reverting all splitting operations at $z$,
\[
\sum_{T \in \mathcal{T}\colon z \in V[T]} \mu_T + \sum_{v \in V'} \spare{v} = \frac{x(\delta(z))}{2} = y_z\enspace.
\]
Now, for each $w \in V'$ do the following:
If $\spare{w} > 0$, let $\mathcal{T}'' = \{T\in \mathcal{T}\colon w \in V[T], z \notin V[T]\}$. Note that
\[
\sum_{T\in \mathcal{T}\colon w\in V[T]} \mu_T = y_w \ge y_z = \sum_{T \in \mathcal{T}\colon z \in V[T]} \mu_T + \sum_{v \in V'} \spare{v} \enspace,
\]
where the inequality above holds because $z$ minimizes $y_v$ over all $v\in V\setminus\{r\}$, and the first equality holds by the inductive assumption and the fact that for every vertex in $V'$ the total weight of trees covering this vertex is unchanged under splittings and their reversal.

In particular, the above implies $\sum_{T\in\mathcal{T}''} \mu_T\geq \spare{w}$.
If this inequality is strict, remove trees from $\mathcal{T}''$ until $\sum_{T\in\mathcal{T}''} \mu_T = \spare{w}$ (again, this may require creating a copy of some tree and splitting its weight).
Now, for each $T \in \mathcal{T}''$, add $\{z,w\}$ to $E[T]$.
Note that this increases the total weight of the trees containing $z$ by $\spare{w}$.
Hence, after using up all spares, we get
\begin{equation}\label{eq:incident_trees}
\sum_{T \in \mathcal{T}\colon z \in V[T]} \mu_T = y_z\enspace.
\end{equation}

Note that throughout the above operations, the graphs $T \in \mathcal{T}$ are trees, and after reverting all splitting operations, $\sum_{T\in \mathcal{T}}\mu_T\chi^{E[T]} = x - \chi^{\{e_0\}}$ by construction.
Moreover, as mentioned above, for every vertex in $V'$, the total weight of trees covering this vertex is unchanged.
Hence, $\sum_{T\in\mathcal{T}\colon v\in V[T]}\mu_T = y_v$ is still satisfied for each $v \in V'$, and also for $z$ by~\eqref{eq:incident_trees}.

Finally, we note that our construction can be executed in polynomial time.
Indeed, by \cref{thm:complete_splitting_pctsp}, in each step of our inductive procedure, we have to revert less than $\operatorname{poly}(|V|)$ many splitting operations, which increases the total number of trees in $\mathcal{T}$ by an additive $\operatorname{poly}(|V|)$.
This implies that the size of $\mathcal{T}$ remains polynomially bounded throughout.
\end{proof} 
\begingroup
\setlength{\emergencystretch}{1em}
\printbibliography
\endgroup

\end{document}